\newtheorem{theorem}{Theorem}[section]
\newtheorem{lem}[theorem]{Lemma}
\newtheorem{prop}[theorem]{Proposition}
\theoremstyle{definition}
\newtheorem{definition}[theorem]{Definition}
\newtheorem{example}[theorem]{Example}
\theoremstyle{remark}
\numberwithin{equation}{section}
\newcommand{\ts}{\mathcal{S}}
\newcommand{\tv}{\mathcal{V}}
\newcommand{\tp}{\mathcal{X}}
\newcommand{\sm}{\mathcal{S}_{\text{m}}}
\newcommand{\vm}{\mathcal{V}_{\text{m}}}
\newcommand{\mD}{\mathcal{D}}
\newcommand{\mE}{\mathcal{E}}
\newcommand{\Z}{\mathbb{Z}}
\newcommand{\mP}{\mathfrak{P}}
\DeclareMathOperator{\AVG}{AVG}
\DeclareMathOperator{\EG}{EG}
\DeclareMathOperator{\REG}{REG}
\begin{document}


\title{Measuring Political Gerrymandering}
\author{Kristopher Tapp}
\address{Department of Mathematics\\ Saint Joseph's University\\ 5600 City Avenue
         Philadelphia, PA 19131}
\email{ktapp@sju.edu}
\date{\today}
\begin{abstract} In 2016, a Wisconsin court struck down the state assembly map due to unconstitutional gerrymandering.  If this ruling is upheld by the Supreme Court's pending 2018 decision, it will be the fist successful political gerrymandering case in the history of the United States.  The \emph{efficiency gap} formula made headlines for the key role it played in this case.  Meanwhile, the mathematics is moving forward more quickly than the courts.  Even while the country awaits the Supreme Court decision, alternative versions of the efficiency gap formula have been proposed, analyzed and compared.  Since much of the relevant literature appears (or will appear) in law journals, we believe that the general math audience might find benefit in a concise self-contained overview of this application of mathematics that could have profound consequences for our democracy.
\end{abstract}

\maketitle

\section{introduction}
\emph{Partisan gerrymandering} means the manipulating of voting district boundaries for the advantage of one political party.  Although the Supreme Court has indicated that extreme partisan gerrymandering is unconstitutional, it failed to throw out the particular state maps under consideration in \emph{Davis v. Bandemer} (1986), \emph{Vieth v. Jubelirer} (2004) and \emph{LULAC v. Perry} (2006).  Justice Anthony Kennedy wrote that the Court had found ``no discernible and manageable standard for adjudicating political gerrymandering claims,''  but his opinion left the door open for future gerrymandering cases by enumerating the properties that he believed a manageable standard would require.  Motivated by Kennedy's criteria, Stephanopoulos and McGhee proposed their \emph{efficiency gap} formula to measure the degree of partisan gerrymandering in an election~\cite{McGhee1},\cite{SM}.  Their formula was one key to the plaintiffs' success in the \emph{Gill v. Whitford} (2016) case, in which a Wisconsin court struck down the state assembly map.  The case was appealed to the Supreme Court, with a decision expected in June 2018.

Meanwhile, alternative versions of the efficiency gap formula have been proposed and studied by McGhee, Nagle, Cover and others. Our purpose is to survey the mathematical (rather than legal) aspects of theses works, and provide examples, novel illustrations and a few new results to support our conclusion that one of the new alternatives is ultimately preferable to the original efficiency gap formula.

\section{A simple example}
Redistricting matters!  Image a simple state with 50 voters, 30 loyal to the red party and 20 to the blue party.  Figure~\ref{F:Examp} illustrates three possible ways to partition these voters into 5 districts.

\begin{figure}[ht!]
   \scalebox{.2}{\includegraphics{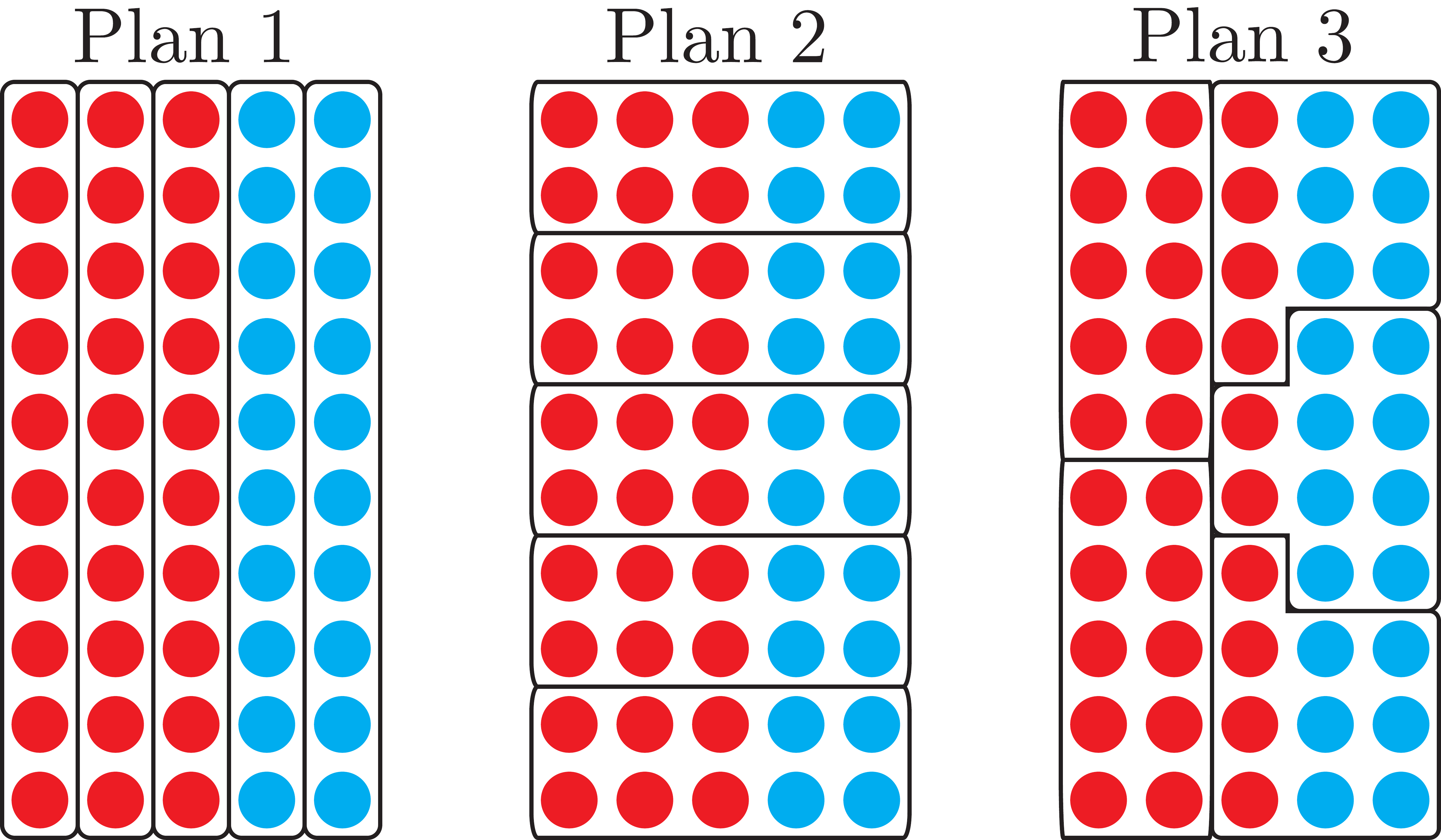}}
\caption{Three ways to divide 50 voters into 5 districts}\label{F:Examp}
   \end{figure}

Let $\tv$ denote the proportion of votes received by the red party and $\ts$ the proportion of districts (also called seats) won by the red party.  Assume full voter turnout, so that $\tv=0.6$.  Plan 1 is \emph{proportional}, which means that $\ts=\tv$.  But the red party would prefer plan 2 with $\ts=1$, while the blue party would prefer plan 3 with $\ts=0.4$.  Notice that plan 1 is the least competitive, which would make for boring election night television.

Plan 3 exhibits telltale features of a map that was gerrymandered by the blue party.  The opponent red voters were \emph{packed} into two districts where they wasted votes by having far more than the required $50\%$, and the rest of the red opponent voters were \emph{cracked} (thinly distributed) into districts where they lacked a majority, and therefore wasted their votes on losing candidates.  In general, gerrymandering involves the intertwined strategies of packing (wasting opponent votes on unnecessary super-majorities) and cracking (wasting opponent votes on losing candidates).  It's all about wasting opponent votes.  We will use this example to test each gerrymander-detection method discussed in this paper.

Plan 3 is the best that the blue party can do because in an election with $5$ districts of equal voter-turnout, the pair $(\tv,\ts)$ will lie in the interior of one of the 6 horizontal lines illustrated in Figure~\ref{F:domain} (left).  The left ends of these horizontal lines indicate that the red party needs more than $10\%$ of the votes to capture one seat, more than $20\%$ to capture two seats, etc.  The right ends indicate that the blue party has exactly these same restrictions.  These lines all lie in the parallelogram $\mP$ with vertices $\{(0,0),(.5,0),(.5,1),(1,1)\}$ illustrated in Figure~\ref{F:domain} (right).

\begin{figure}[ht!]
   \scalebox{.15}{\includegraphics{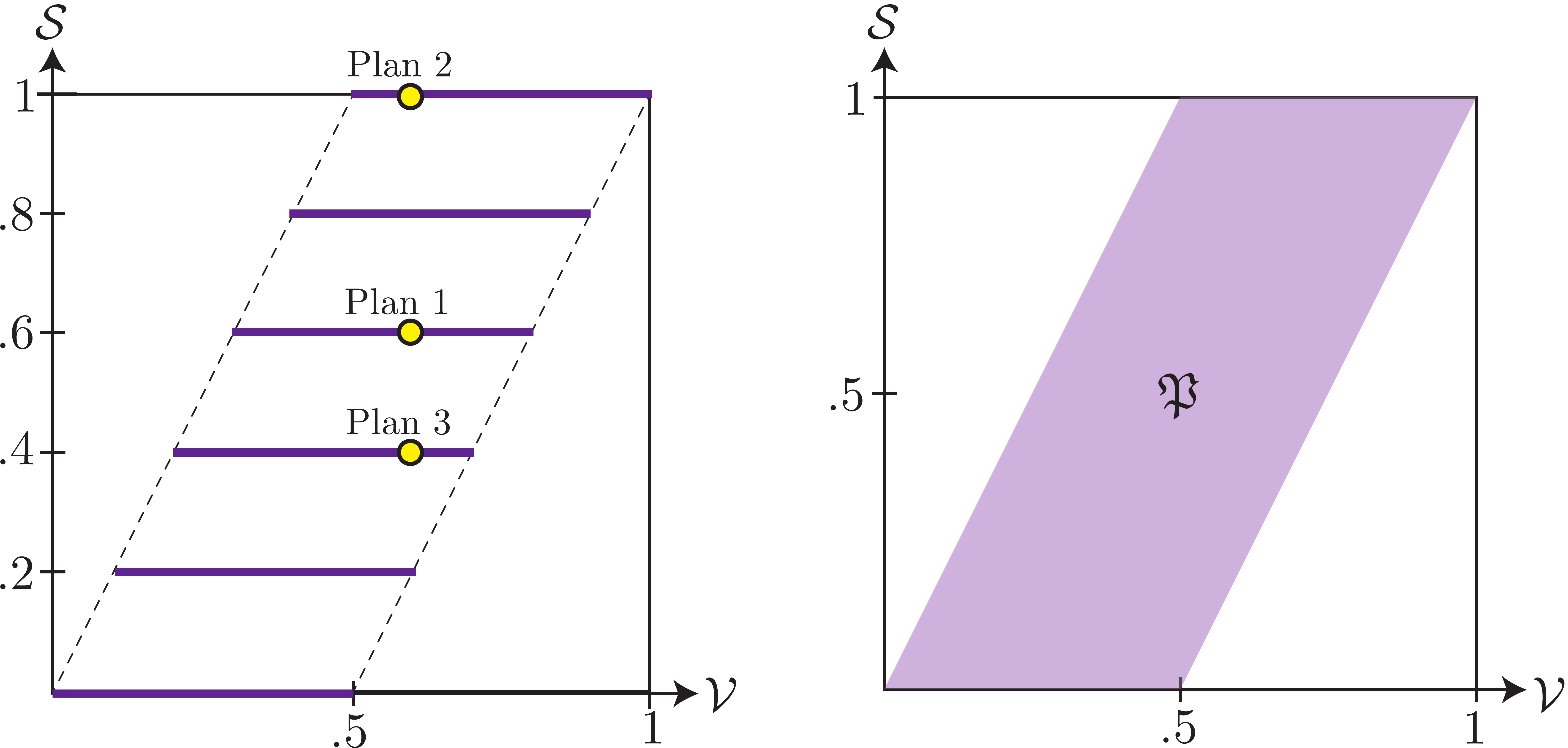}}
\caption{All possible $(\tv,\ts)$-outcomes lie in $\mP$.}\label{F:domain}
   \end{figure}

\section{Setup}
In the remainder of this paper, we consider a simple model of a state divided into $n$ districts with exactly two political parties, called $A$ (red) and $B$ (blue).  If $i\in\{1,...,n\}$ and $P\in\{A,B\}$, we define:
\begin{align*}
V^P_i  &=  \text{the number of votes cast for party $P$ in district $i$}\\
S^P_i  &=  \begin{cases} 1 &\text{if party $P$ won district $i$}\\
                          0 &\text{otherwise.}\end{cases}\\
\Gamma^P &=\left\{i\mid S^P_i=1\right\} = \text{ the set of districts won by party $P$}.
\end{align*}

Omitted superscripts and subscripts will be interpreted as summed over.  For example, $V^P$ denotes the number of votes cast for party $P$ in all districts, $V_i$ denotes the number of votes cast by both parties in district $i$, and $V$ denotes the total number of statewide votes cast.  With this convention, notice that $S^P$ is the number of seats won by party $P$ (which explains the variable name), while $S=n$.

As in the previous section, we will use the calligraphy font for the proportion of votes and seats won by party $A$, and a subscript $m$ for the marginal version of these measurments (the amount above $\frac 12$):
$$ \tv = \frac{V^A}{V}, \qquad \vm=\tv-\frac 12,  \qquad \ts = \frac{S^A}{S},\qquad \sm=\ts-\frac 12.$$
In the examples from the previous section, party A received $60\%$ of the statewide vote, so $\tv=0.6$ and $\vm=0.1$.

In this paper, we will only consider measurements and methods that detect gerrymandering purely from the district-outcome-data:
$$\mathcal{D}=\left((V^A_1,V^B_1),...,(V^A_n,V^B_n)\right).$$
This restriction forces us to ignore geometric measurements, like squared perimeter divided by area, that attempt to flag bizarrely shaped districts as gerrymandered, as well as very promising recent computer modeling work by the group \emph{Quantifying gerrymandering @Duke University}.  For example, \cite{DUKE} argues that the outcome of the Wisconsin state assembly election was extremely pro-Republican compared to a large number of simulated elections using district maps randomly sampled among the set of maps that respect geometric/geographical criteria at least as well as the actual map did.

The plaintiffs in recent gerrymandering cases didn't have to choose, but rather introduced geometric and statistical evidence in addition to evidence based on the types of measurements discussed in this paper.

District maps are legally required to have approximately equipopulus districts.  We henceforth make the slightly stronger assumption that the districts have equal voter turnout:
$$
\text{\textbf{The equal turnout hypothesis:} }V_i=V/n \text{ for each }i\in\{1,...,n\}.
$$
This hypothesis insures that $(\tv,\ts)$ lies in the interior of $\mP$ (the parallelogram in Figure~\ref{F:domain}).  In fact, the set of possible $(\tv,\ts)$-outcomes fill $\mP$ more and more densely as $V,n\rightarrow\infty$.  The right and left edges of $\mP$ represent outcomes that would only be possible if a tied district were awarded to one of the parties.  We henceforth only consider elections without any tied districts, so that $S^A+S^B=S$ (all seats are won).  

\section{Symmetry methods}
In this section, we review the basic idea of the symmetry methods that dominated the literature on gerrymander-detection through the \emph{LULAC v. Perry} (2006) Supreme Court decision.

An election result determines a single ordered pair $(\tv,\ts)\in \mP$.  But the district-outcome-data $\mathcal{D}=\left((V^A_1,V^B_1),...,(V^A_n,V^B_n)\right)$ allows one to determine what value of $(\tv,\ts)$ would have resulted had there been a uniform voter opinion shift in favor or against party $A$.  More precisely, for any $m\in\Z$, suppose that exactly $m$ voters in each district had switched from $B$ to $A$ (interpreted as vice-versa if $m$ is negative).  Define $(\tv_m,\ts_m)\in \mP$ as the outcome that would have resulted from the corresponding modified district-outcome-data
$$ \mathcal{D}_m=\left((V^A_1+m,V^B_1-m),...,(V^A_n+m,V^B_n-m)\right),$$
with voter-counts less than zero interpreted as zero and voter-counts larger than $V_i$ interpreted as $V_i$.

There are more sophisticated and more reasonable ways to model a ``uniform'' shift in voter opinion, say in favor of party A.  For example, voters could be flipped one at a time from party B to A, with all party B voters in all district being equally likely to be the next to flip (see~\cite{NagleP}); this method avoids the issue of voter counts going below zero and above $V_i$.  But our simple additive model suffices to demonstrate the key ideas here.

The set $\{(\tv_m,\ts_m)\mid m\in\Z\}$  is a \emph{simulated seats-votes curve}.  For each possible value of $\tv$, it shows the portion of seats that party $A$ would have won if a uniform shift in voter opinions had caused them to received that fraction of the votes.

\begin{figure}[ht!]
   \scalebox{.14}{\includegraphics{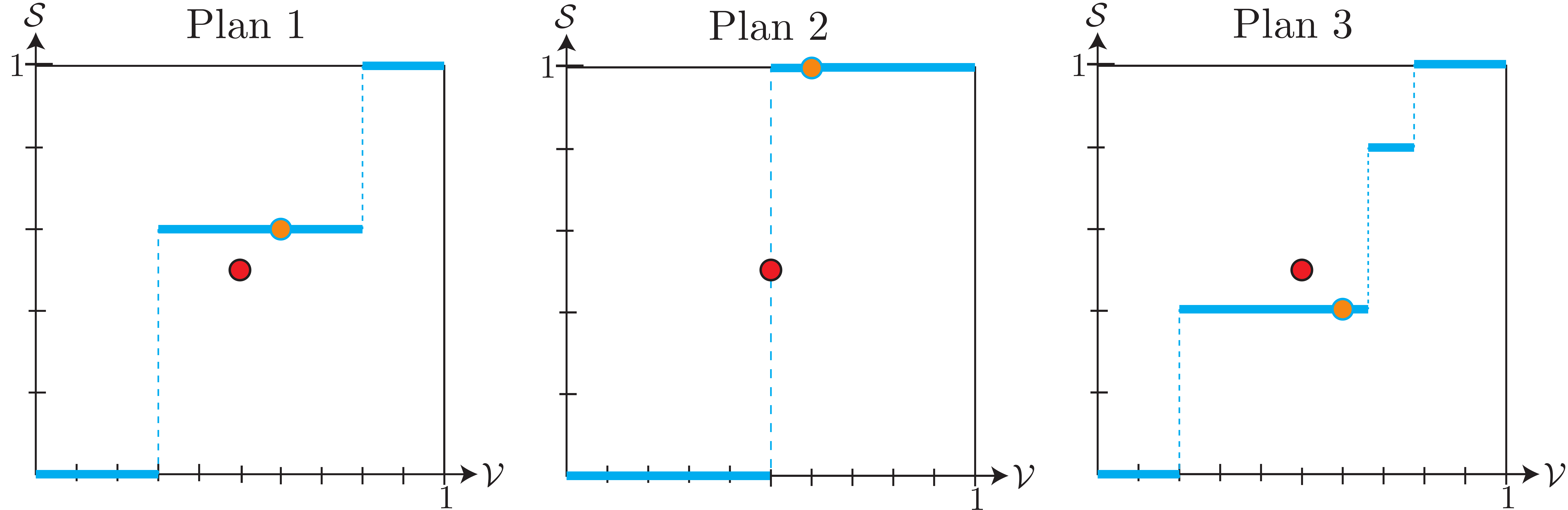}}
\caption{Seats-votes curves for the plans from Figure~\ref{F:Examp}}\label{F:SV}
   \end{figure}

Figure~\ref{F:SV} illustrates the simulated seats-votes curve for the district plans from Figure~\ref{F:Examp} (with each circle in Figure~\ref{F:Examp} representing 1000 voters rather than 1 voter).  Only plan 2 is fair in the sense that its curve is symmetric about the point $(.5,.5)$.  In the actual election, party A won $100\%$ of the seats with $60\%$ of the votes, but party $B$ would have received the same reward -- $100\%$ of the seats -- if they had been the party who received $60\%$ of the votes.  This district map treats the two parties equally.

Plan 1 exhibits bias in favor of party A (red), while plan 3 exhibits bias in favor of party B (blue).  There are several precise methods in the literature used to measure bias -- how much each graph fails to be symmetric about the point $(.5,.5)$.  The simplest is just the graph's height above $(.5,.5)$, which equals $0.1$ of Plan 1 and equals $-0.1$ for Plan 3.  So a simulated voter shift to $\tv=.5$ causes party A receives $60\%$ of the seats with plan 1 and $40\%$ of the seats with plan 3.  There is some some legal weight behind the principle that a party receiving more than half the votes should receive at least half of the seats.  The actual outcome of plan 3 violated this principle, as did the simulated outcome of plan 1.

In addition to bias, another commonly discussed measurement is the \emph{responsiveness} of a seats-votes curve, which usually means its average rate of change over an interval like say $\tv\in[.45,.55]$.  High responsiveness means that the districts are more competitive, so that small changes in voter preference have larger effects on the number of seats obtained, which is usually thought of as a desirable property.

For example, ongoing legal challenges to the Pennsylvania congressional map after the 2012 election are based not just on its bias in favor of Republicans, but also on its low responsiveness.  The Republicans won $13$ of the $18$ seats with only about $49\%$ of the statewide vote, and the simulated seats-votes curve was nearly constant on $\tv\in[.4,.6]$, which means that the Democrats could not improve their unfair situation even by winning more votes.

Our short and simple discussion in this section doesn't do justice to the abundance of literature on symmetry measurements of seats-votes curves.  There are piles of papers containing more sophisticated ways to model uniform voter shifts, construct seats-votes curves, measure their deviation from being symmetric about $(.5,.5)$, perform statistical analyses, and anchor the measurements to legal principles.  For an expanded view, we recommend~\cite{NagleP},\cite{GK},\cite{GrK} and references therein.

These symmetry-based measurements of gerrymandering failed to impress a majority of the Supreme Court justices in cases up to and including \emph{LULAC v. Perry} (2006), partly because they are rooted in speculative and somewhat arbitrary counterfactual simulations.  This prompted the invention of the efficiency gap formula, which measures the degree of gerrymandering based on the counting of wasted votes.

\section{The efficiency gap}
As discussed in Section 2, gerrymandering boils down to forcing voters in the opponent party to waste votes, so a natural fairness principle is to require that the two parties waste about the same number of votes.  There are two types of wasted votes: ``losing votes'' cast for a losing candidate, and ``excess votes'' above the $50\%$ required to win a district.  So the number of votes wasted by party $P$ in district $i$ equals:
\begin{equation}\label{E:Waste}
W^P_i = \begin{cases} V_i^P &\text{ if party $P$ lost district $i$ (losing votes)} \\
                      \left(V_i^P - \frac{V_i}{2}\right)&\text{ if party $P$ won district $i$ (excess votes)}
         \end{cases}
\end{equation}

Recall the convention that omitted superscripts and subscripts are interpreted as summed over, so $W^P$ equals the total number of votes wasted by party $P$ in all districts.  McGhee defined the \emph{efficiency gap} in~\cite{McGhee1} as
\begin{equation}\label{E:EG}\EG = \frac{W^B-W^A}{V},\end{equation}
which is just the difference in the number of votes wasted by the two parties, divided by the total number of voters.

The goal of gerrymandering is to waste more votes from the opponent party than from one's own party, so $\EG$ is designed to measure the extent to which this occurred.  If $\EG$ is positive (party $B$ wasted more votes), then the efficiency gap is evidence that party $A$ manipulated the district boundaries for political gain.  Similarly, a negative efficiency gap is evidence that party $B$ manipulated the map.

Figure~\ref{F:EG} illustrates $\EG$ for the example maps from Section 2.  All three plans have $|\EG|>0.08$, which Stephanopoulos and McGhee proposed as an indicator of gerrymandering in state assembly elections.

\begin{figure}[ht!]
   \scalebox{.2}{\includegraphics{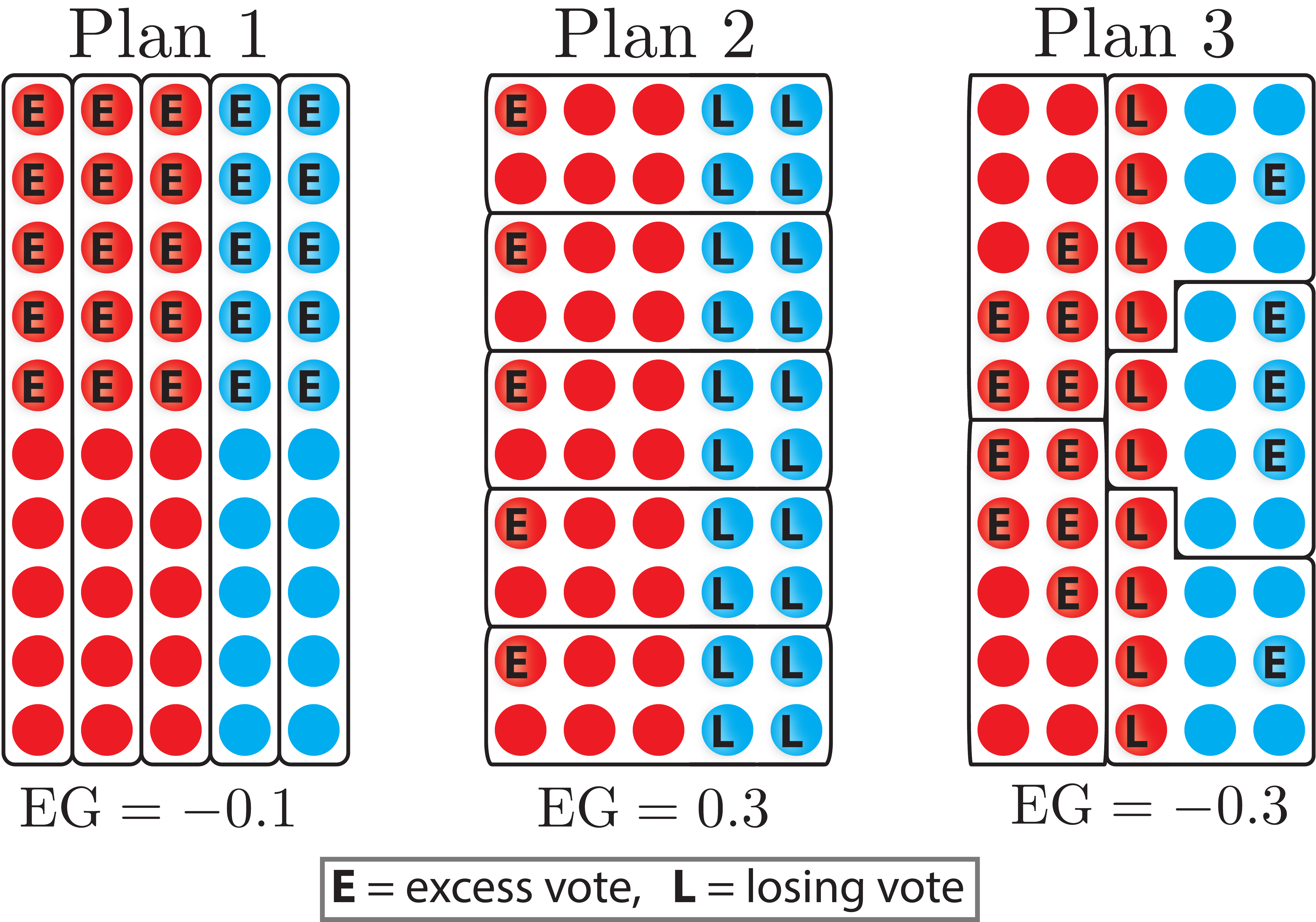}}
\caption{The efficiency gap of the plans from Figure~\ref{F:Examp}}\label{F:EG}
   \end{figure}

The $\EG$ represents a fairness principle that is sometime inconsistent with the symmetry principle of the previous section; for example, plan 2 has a symmetric seats-votes curve, but yet is rated by $\EG$ as highly gerrymandered by the red party.

Notice that $-0.5\leq \EG\leq 0.5$ because in each district half of all votes are wasted, and the extreme cases occur when all of the statewide wasted votes come from a single party.

A couple of weaknesses of $\EG$ are apparent immediately from Equation~\ref{E:EG}, but the plaintiffs in \emph{Gill v. Whitford} successfully countered arguments that the defence levied based upon these weaknesses:
\begin{itemize}
\item $\EG$ depends on the election outcome (unlike compactness measurements that depend only on the map), and is volatile in competitive races.  For example, if all districts are highly competitive and a single party happens by chance to win all of the districts, then $\EG$ would provide strong evidence that the winning party manipulated the map.  To counter this complaint, the plaintiffs showed that Wisconsin's high $\EG$ persisted in computer-simulated elections with random swings.
\item Demographic factors can cause $\EG$ to be high.  For example, Democrats tend to be packed into cities where they waste votes by having far more than the majority needed to elect the Democratic candidate.  Lawyers for the defense argued that Wisconsin's high $\EG$ is explained by demographics rather than manipulated district boundaries.  The plantiffs counted with computer simulations showing that the observed $\EG$ is high compared to the average $\EG$ of simulated elections in large numbers of random computer-generated district maps~\cite{JC}.
\end{itemize}

McGhee made the key observation that $\EG$ depends on much less information than its definition suggests:
\begin{lem}[McGhee~\cite{McGhee1}]\label{L:EG} $\EG = \sm - 2\cdot\vm.$
\end{lem}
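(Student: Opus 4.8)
The plan is to compute the total wasted votes $W^A$ and $W^B$ directly from their definitions in Equation~\ref{E:Waste}, impose the equal turnout hypothesis together with the no-ties assumption, and simplify the result into the marginal variables. The cleanest route exploits a bookkeeping identity: in every district the winner and loser together waste exactly half of that district's votes. Indeed, if party $P$ won district $i$ then $P$ wastes $V_i^P - V_i/2$ while the opponent wastes all of its $V_i - V_i^P$ votes, and these sum to $V_i/2$; the same total results if $P$ lost. Summing over all $n$ districts gives $W^A + W^B = V/2$, which requires only that each district has a genuine winner (the no-ties assumption). Consequently $W^B - W^A = V/2 - 2W^A$, so that $\EG = 1/2 - 2W^A/V$, and the whole problem reduces to evaluating the single quantity $W^A$.

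Next I would compute $W^A$ by combining the two cases of Equation~\ref{E:Waste}. Writing $W^A = \sum_{i \in \Gamma^A}(V_i^A - V_i/2) + \sum_{i \in \Gamma^B} V_i^A$ and noticing that the term $V_i^A$ appears for every district regardless of the winner, the sum collapses to $W^A = V^A - \frac{1}{2}\sum_{i \in \Gamma^A} V_i$. Here the equal turnout hypothesis does the essential work: since $V_i = V/n$ for every $i$ and $|\Gamma^A| = S^A$, the remaining sum is $\sum_{i \in \Gamma^A} V_i = (V/n) S^A$, giving $W^A = V^A - \frac{V S^A}{2n}$.

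Substituting this into $\EG = 1/2 - 2W^A/V$ yields $\EG = 1/2 + S^A/n - 2V^A/V$. Recognizing that $\ts = S^A/n$ (since $S = n$) and $\tv = V^A/V$, this is $\EG = \ts + 1/2 - 2\tv$, and a final rewrite using $\ts = \sm + 1/2$ and $\tv = \vm + 1/2$ collapses it to $\EG = \sm - 2\vm$. The argument is essentially routine algebra; the only genuine subtleties are recognizing where each hypothesis enters — the no-ties assumption to guarantee every district has a winner (so that $W^A + W^B = V/2$), and equal turnout to convert $\sum_{i \in \Gamma^A} V_i$ into a clean multiple of $S^A$. I expect the main obstacle to be purely organizational: keeping the case split between $\Gamma^A$ and $\Gamma^B$ straight while collapsing the sums, rather than any conceptual difficulty.
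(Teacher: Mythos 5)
Your proposal is correct and follows essentially the same route as the paper: the heart of both arguments is the identity $W^A = V^A - S^A\cdot\frac{V}{2n}$ obtained by collapsing the two cases of the wasted-vote definition under the equal turnout hypothesis. Your detour through $W^A+W^B=V/2$ (which the paper notes separately when bounding $\EG$) merely lets you avoid writing the symmetric computation for $W^B$; otherwise the algebra is identical.
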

For example in 2016, the Republicans held $65\%$ of the state assembly seats in Wisconsin ($\sm=0.15$) despite receiving only $52\%$ of the statewide vote ($\vm=0.02$).  Assuming equal turnout, this is the only information needed to compute that $\EG = 0.11$ (here $A$=Republicans and $B$=Democrats).  It might be surprising that a district-by-district tally of wasted votes is not required.  The lemma also gives a faster way to calculate and understand the results in Figure~\ref{F:EG} without the need to tally wasted votes.

\begin{proof}
Equation~\ref{E:Waste} becomes:
\begin{equation}\label{E:21}W_i^P = V_i^P - S_i^P\cdot\frac{V_i}{2} =  V_i^P - S_i^P\cdot\frac{V}{2S},\end{equation}
so the total number of votes wasted by party $P$ equals:
$$W^P = \sum_{i=1}^n W^P_i = V^P - S^P\cdot\frac{V}{2S}.$$
Therefore:
\begin{align*}\label{EGBD}
  \EG =\frac{W^B-W^A}{V}  = \frac{V^B-V^A}{V} - \frac 12\cdot \frac{S^B-S^A}{S}
                         = -2 \vm + \sm.
\end{align*}
\end{proof}

The fairness principle that $\EG$ should be small becomes:
\begin{equation}\label{E:EGsmall}\EG\cong 0 \,\,\,\iff \,\,\,\sm\cong 2\cdot \vm,\end{equation}
which conflicts with the principle of proportionality ($\sm\cong\vm$).  For example, if a party wins $60\%$ of the statewide vote, proportionality requires them to win about $60\%$ of the seats, but achieving a zero efficiency gap requires them to win $70\%$ of the seats.  Thus, proportionality is replaced with a \emph{winner's double bonus} principle: the winner deserves a seat margin equal to twice the vote margin.

The plaintiffs in \emph{Gill v. Whitford} argued that this double bonus is normative because the factor $2$ matches historical data (see \cite{Tufte} and Figure 5 of~\cite{SM2}) and because the principle is derived from the canonical activity of equating wasted votes (although the counting and equating of wasted votes here involved some arbitrary decisions that we will soon discuss).  They might also have mentioned that $\EG=0$ is a visually appealing centerline of $\mP$; More precisely, Figure~\ref{F:two_things} shows that for a given value of $\ts$, the choice of $\tv$ that makes the efficiency gap vanish is half way between its allowable extremes.  The double bonus ``looks canonical'' because the edges of $\mP$ also have slope $2$.

\begin{figure}[ht!]
   \scalebox{.17}{\includegraphics{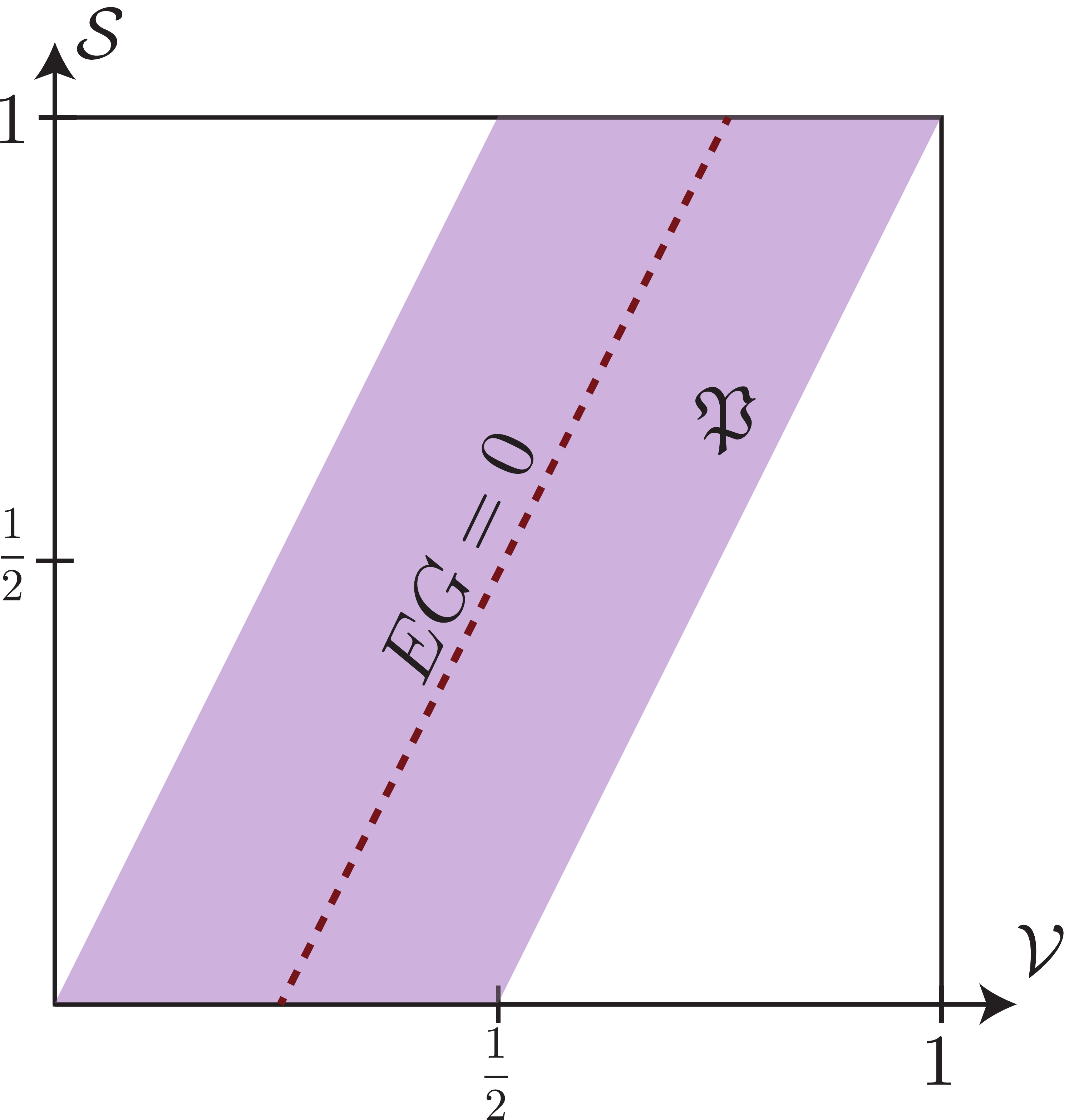}}
\caption{$\EG=0$ is a visually natural centerline of $\mP$.}\label{F:two_things}
   \end{figure}

On the other hand, the double bonus does not seem to emerge from any natural probability model.  The most naive non-state-specific model would assume the parties are uniformly distributed through the state.  There are various ways to incorporate random noise into such a model, but with a large state population, such models predict that the majority party almost always takes all the seats.

Lemma~\ref{L:EG} reveals the efficiency gap's most serious deficiency: when one party has a sufficiently large majority, the efficiency gap is guaranteed to indicate that the minority party manipulated the map, regardless of how the map was drawn.  More precisely, when party A has over $75\%$ of the statewide vote ($\vm>.25$), Lemma~\ref{L:EG} implies that $\EG<0$, which indicates that party $B$ manipulated the map.  In fact:
\begin{equation}\label{E:girraf}\vm>.25+x\Rightarrow \EG<-2x.\end{equation}
This limitation renders the efficiency gap useless in lopsided elections.  

McGhee and Stephanopoulos observed that in the past several decades there have been almost no congressional or state house elections with $\vm>.25$~\cite{SM2}.  Nonetheless, we believe that a robust mathematical formula should correctly handle extreme cases.

Before fixing this deficiency with a better formula, we will highlight more things that $\EG$ gets (sort of) right.

\section{The efficiency gap and competitiveness}
\begin{wrapfigure}{r}{0cm}
   \scalebox{.2}{\includegraphics{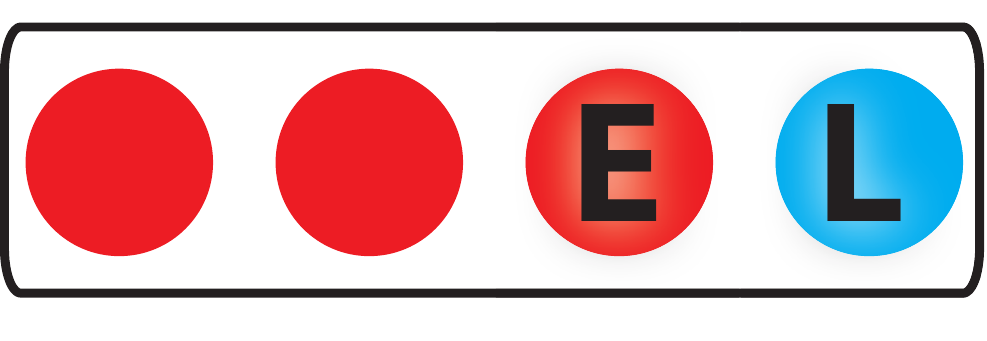}}
   \end{wrapfigure}
Does the $\EG\cong 0$ principle prevent a partisan map-making team from packing and cracking its opponents?  The answer is yes, provided one is willing to precisely define a ``packed district'' as a district won with more than $75\%$ of the vote, and a ``cracked district'' as a district lost with between $25\%$ and $50\%$ of the vote.  To understand these cut-off values here, notice that any district in which the winning party has exactly $75\%$ of the vote is neutral - such a district contributes zero to the efficiency gap because the wasted votes are evenly split between the parties, as shown in the figure on the right.

A district won with more than $75\%$ of the vote contributes evidence that the losing party manipulated the map (by packing the winning party), while a district won with less than $75\%$ of the vote contributes evidence that the winning party manipulated the map (by cracking the losing party).

This discussion can be reframed in terms of competitiveness by defining:
$$C_i = \frac{|V_i^A - V_i^B|}{V_i},\,\,\,\,\, C^P = \AVG\left\{C_i\mid i\in\Gamma^P\right\},\,\,\,\,\, C = \AVG\left\{C_i\mid 1\leq i\leq n\right\}.$$
That is, $C_i\in[0,1]$ denotes the competitiveness of district $i$, defined as the proportion of the vote by which the district was won, $C^P$ denotes the average competitiveness of $P$-won districts, and $C$ denote the average competitiveness of all districts.  Notice that $C_i=.5$ for a neutral district that contributes zero to the efficiency gap.  The efficiency gap penalizes the winner of a competitive district ($0<C_i<.5$) for cracking the loser, and it penalizes the loser of a non-competitive district ($.5<C_i<1$) for packing the winner.  In fact, the additive contribution to $\EG$ from district $i$ equals $\pm\left(C_i-\frac 12\right)$, which lead Cover to observe:

\begin{prop}[Cover~\cite{Cover}]\label{P:cover}The efficiency gap is the seat-share-weighted difference of the average of the amounts by which the competitiveness of districts won by parties $A$ and $B$ differs from $\frac 12$:
$$\EG=\underbrace{\left(\frac 12 - \sm \right)}_{\frac{S^B}{S}}
 \left(C^B-\frac 12\right) - \underbrace{\left(\frac 12 + \sm\right)}_{\frac{S^A}{S}}\left(C^A-\frac 12\right).$$
\end{prop}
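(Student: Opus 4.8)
The plan is to compute the contribution of each individual district to the efficiency gap and then re-sum, grouping the districts according to their winner. Starting from Equation~\ref{E:21}, the contribution of district $i$ to $\EG$ is $(W_i^B - W_i^A)/V$, and the first move is to substitute the equal turnout hypothesis $V_i = V/n$ so that every voter-count factor collapses to the constant $\frac 1n$.

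Next I would split into two cases according to who wins district $i$, since the absolute value in $C_i = |V_i^A - V_i^B|/V_i$ resolves differently in each. If party $A$ wins, then $S_i^A = 1$, $S_i^B = 0$, and $V_i^A - V_i^B = C_i V_i$; plugging into Equation~\ref{E:21} gives a per-district contribution of $-\frac 1n\left(C_i - \frac 12\right)$. If party $B$ wins, the roles reverse and the contribution is $+\frac 1n\left(C_i - \frac 12\right)$. This recovers (up to the $\frac 1n$ normalization suppressed in the surrounding text) the claim that the contribution from district $i$ is $\pm\left(C_i - \frac 12\right)$, with the sign arranged so that a cracked loser ($C_i < \frac 12$) indicts the district's winner and a packed loser ($C_i > \frac 12$) indicts the district's loser.

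Summing over all $n$ districts and grouping by winner yields
\[
\EG = \frac 1n\left[\sum_{i\in\Gamma^B}\left(C_i - \tfrac 12\right) - \sum_{i\in\Gamma^A}\left(C_i - \tfrac 12\right)\right].
\]
The key step is then to read each inner sum as an average scaled by a count: because $C^P = \AVG\{C_i \mid i \in \Gamma^P\}$ and $|\Gamma^P| = S^P$, we have $\sum_{i\in\Gamma^P}\left(C_i - \frac 12\right) = S^P\left(C^P - \frac 12\right)$. Substituting these and using $S = n$ converts the prefactors $\frac{S^A}{n}$ and $\frac{S^B}{n}$ into the seat-shares $\frac{S^A}{S} = \frac 12 + \sm$ and $\frac{S^B}{S} = \frac 12 - \sm$ (as indicated by the underbraces), which is exactly the stated identity.

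I do not expect a genuine obstacle; the only place that demands care is the sign bookkeeping across the two cases, where one must correctly unfold $|V_i^A - V_i^B|$ depending on the winner so that the two contributions carry opposite signs. As a consistency check, one could confirm agreement with Lemma~\ref{L:EG}: the same equal-turnout computation gives $2\vm = (S^A C^A - S^B C^B)/n$, and expanding the proposition's right-hand side should then collapse back to $\sm - 2\vm$.
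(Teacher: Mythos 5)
Your proposal is correct and follows essentially the same route as the paper: compute the per-district waste-gap $\frac{W_i^B-W_i^A}{V_i}=\pm\left(C_i-\frac 12\right)$ by cases on the winner, sum over districts grouped by $\Gamma^A$ and $\Gamma^B$, and convert each sum into a seat-count times an average. The sign bookkeeping and the final consistency check against Lemma~\ref{L:EG} are both right.
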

\begin{proof} The efficiency gap equals the average waste-gap of the districts:
$$ \EG  = \frac{W^B-W^A}{V}=\frac{1}{n}\sum_{i=1}^n\frac{W^B_i-W^A_i}{V_i}.$$
The waste-gap in a single district is linearly related to its competitiveness:
$$\frac{W^B_i-W^A_i}{V_i} =\begin{cases} C_i-\frac 12 &\text{ if } i\in\Gamma^B \\ -C_i+\frac 12 &\text{ if } i\in\Gamma^A. \end{cases}$$
Thus,
\begin{align*}
\EG & = \frac{1}{n}\left(\sum_{i\in\Gamma^B}\left(C_i-\frac 12\right)
        -  \sum_{i\in\Gamma^A}\left(C_i-\frac 12\right)\right) \\
   & = \left(\frac{S^B}{S}\right)\cdot\AVG\left\{C_i-\frac 12\mid i\in\Gamma^B\right\}
      - \left(\frac{S^A}{S}\right)\cdot\AVG\left\{C_i-\frac 12\mid i\in\Gamma^A\right\},
\end{align*}
from which the result follows.
\end{proof}

The \emph{Gill v. Whitford} majority was swayed by evidence that Democrat-won districts were far less competitive than Republican-won districts, indicating that Republican map-makers packed Democrats into safe districts.  They regarded this \emph{differential competitiveness} evidence as independent from the efficiency gap evidence.  But Proposition~\ref{P:cover} shows that the efficiency gap is (sort of) related to differential competitiveness.  The simplest relationship is:
\begin{equation}\label{E:eg_comp}\sm=0 \Rightarrow \EG=\frac 12(C^B-C^A),\end{equation}
so parties who win equal numbers of seats are required to win equally competitive districts on average.  The general relationship is more complicated because $\EG$ measures \emph{weighted} differential competitiveness.  A picture helps.  For any fixed value of $\sm$, the equation $\EG=0$ is graphed in the $C^AC^B$-plane as the line through $(.5,.5)$ with
$$\text{slope} = \frac{.5+\sm}{.5-\sm} = \frac{S^A}{S^B}.$$

This slope, coming from the weighting, helps the party who won more seats, as can be seen in the graphs for plans 1 and 3 (from Section 2) pictured in Figure~\ref{F:competitive}.  In plan 1, all districts were won unanimously ($C^A=C^B=1$), but even though the parties won equally competitive districts, the score $\EG=-0.1$ provides evidence that the minority blue party manipulated the map.  In plan 3, party A (red) won two unanimous districts ($C^A=1$) while party B (blue) won three fairly competitive districts ($C^B=1/3$).  The weighting helps the winning blue party a bit but not enough; the efficiency gap turned out negative enough to indicate that they manipulated the map.

\begin{figure}[ht!]
   \scalebox{.15}{\includegraphics{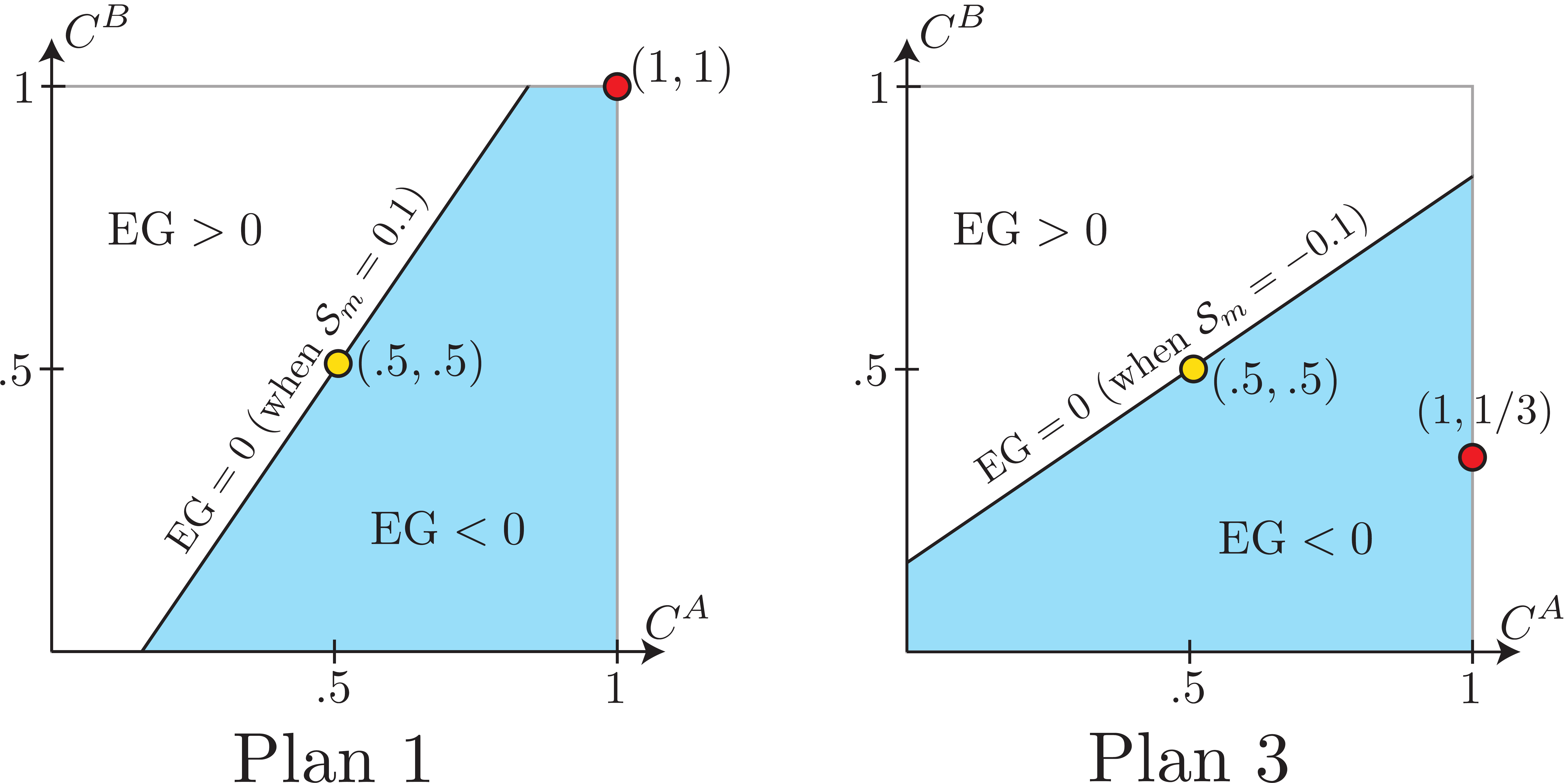}}
\caption{The $\EG$ visualized as a weighted differential competitiveness in plans 1 and 3.}\label{F:competitive}
   \end{figure}

\section{The weighted efficiency gap}
The dissenting judge in \emph{Gill v. Whitford} criticized how the $\EG$ formula counts excess votes.  An ``excess vote'' is supposed to mean a vote beyond what's needed to win a district.  If a party wins a district with $60\%$ of the vote, the $\EG$ formula counts $10\%$ of their votes as wasted.  But shouldn't $20\%$ of their votes count as wasted, since winning really only required more votes than the $40\%$ received by the losing party?  If we agree with this judge that an excess vote should mean a vote beyond the number received by the losing party, then we must alter the formula to double all excess vote counts.  This is the $\lambda=2$ case of Nagle's \emph{weighted efficiency gap} formula, which counts excess votes with an arbitrary weight $\lambda\in\mathbb{R}^+$:
\begin{align*}\label{E:Waste}
W^P_i(\lambda) & = \begin{cases} V_i^P &\text{ if $i\notin\Gamma^P$ (losing votes)} \\
                      \lambda\cdot\left(V_i^P - \frac{V_i}{2}\right)&\text{ if $i\in\Gamma^P$ (excess votes)},\end{cases}\\
   \EG_\lambda & = \frac{W^B(\lambda)-W^A(\lambda)}{V}.
\end{align*}

In our opinion, $\lambda\in\{1,2\}$ are the only natural cases, but there is no harm in allowing an arbitrary weight.  The weighted version of Lemma~\ref{L:EG} becomes:
\begin{lem}[Nagle~\cite{Nagle}]\label{L:geneg}
$\EG_\lambda = \sm - \left(1+\lambda\right)\cdot\vm$.
\end{lem}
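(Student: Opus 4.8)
The plan is to leverage the unweighted result (Lemma~\ref{L:EG}) rather than redo the district-by-district tally from scratch, exploiting the fact that the weight $\lambda$ touches only the excess-vote (winning) terms. First I would observe that the weighted waste formula differs from the $\lambda=1$ formula only on won districts, so for each party
\[
W^P(\lambda) = W^P + (\lambda-1)\,E^P,\qquad E^P := \sum_{i\in\Gamma^P}\left(V_i^P-\tfrac{V_i}{2}\right),
\]
where $E^P$ is party $P$'s total (unweighted) excess-vote count. Dividing the difference $W^B(\lambda)-W^A(\lambda)$ by $V$ and invoking Lemma~\ref{L:EG} for the $\lambda=1$ piece then gives
\[
\EG_\lambda = \EG + (\lambda-1)\,\frac{E^B-E^A}{V} = \sm - 2\vm + (\lambda-1)\,\frac{E^B-E^A}{V}.
\]

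The crux is therefore the identity $\dfrac{E^B-E^A}{V} = -\vm$, which I would establish independently of the district structure. The key trick is to sum the winner's per-district excess $\left(V_i^A-\tfrac{V_i}{2}\right)$ over \emph{all} districts, not just the won ones: this telescopes to $V^A - \tfrac{V}{2}$. Splitting that sum over $\Gamma^A$ and $\Gamma^B$ and using the single-district sign flip $V_i^A - \tfrac{V_i}{2} = -\left(V_i^B - \tfrac{V_i}{2}\right)$ (valid since $V_i^A+V_i^B=V_i$) converts the $\Gamma^B$ terms into $-E^B$, yielding $E^A - E^B = V^A - \tfrac{V}{2} = V\vm$, and hence $\frac{E^B-E^A}{V} = -\vm$.

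Substituting this into the displayed expression gives
\[
\EG_\lambda = \sm - 2\vm - (\lambda-1)\vm = \sm - (1+\lambda)\vm,
\]
as claimed.

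I expect the main obstacle to be spotting the excess-vote identity $E^A - E^B = V^A - \tfrac{V}{2}$; once one thinks to sum the winner's excess over all districts and exploit the per-district sign flip, the computation collapses with no casework. A secondary point worth flagging is bookkeeping about the equal-turnout hypothesis: it enters only through the base case (Lemma~\ref{L:EG}), since the excess identity $E^A-E^B = V^A-\tfrac{V}{2}$ holds verbatim without it.
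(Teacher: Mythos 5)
Your proof is correct, and it is organized differently from the paper's. The paper recomputes the weighted waste from scratch: it writes $W_i^P(\lambda) = V_i^P - S_i^P\bigl(\lambda\tfrac{V}{2S}+(1-\lambda)V_i^P\bigr)$, sums over districts, and then eliminates the leftover cross-tabulated proportions $\tp(X,Y)=\tfrac{1}{V}\sum_{i\in\Gamma^Y}V_i^X$ via the two relations $\tp(A,A)+\tp(B,A)=\tfrac12+\sm$ and $\tp(B,A)+\tp(B,B)=\tfrac12-\vm$, arriving at $\tp(B,B)-\tp(A,A)=-\vm-\sm$. You instead peel off the unweighted case as a black box, writing $W^P(\lambda)=W^P+(\lambda-1)E^P$, and your key identity $E^A-E^B=V^A-\tfrac{V}{2}$ (proved by the telescoping sum of $V_i^A-\tfrac{V_i}{2}$ over all districts plus the per-district sign flip) plays the role of the paper's $\tp$ relations; the two are equivalent under equal turnout, since $E^P=V\,\tp(P,P)-S^P\tfrac{V}{2S}$. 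What your route buys is a cleaner accounting of hypotheses: the paper's relation $\tp(A,A)+\tp(B,A)=\tfrac12+\sm$ itself uses equal turnout, whereas your excess identity holds verbatim without it, so you correctly isolate all dependence on the equal-turnout hypothesis in the base case Lemma~\ref{L:EG}. The paper's route buys the intermediate quantity $\tp(B,B)-\tp(A,A)$, which it reuses immediately afterward (flagged in red) in the proof of the formula for $\REG_\lambda$, so the extra bookkeeping there is not wasted.
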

\begin{proof}
The above weighted definition of wasted votes becomes:
$$W_i^P(\lambda) = V_i^P - S_i^P\cdot\left(\lambda\frac{V}{2S}+(1-\lambda)V^P_i\right),$$
so the total number of votes wasted by party $P$ equals:
\begin{equation}\label{E:waste_lambda_total}W^P(\lambda) = \sum_{i=1}^n W^P_i(\lambda) = V^P - \lambda S^P\cdot\frac{V}{2S}+(\lambda-1)\cdot\sum_{i\in\Gamma^P} V_i^P.\end{equation}
The contributions from the first two terms simplify as in the $\lambda=1$ special case, giving:
\begin{align*}
  \EG_\lambda = -2\vm +\lambda\cdot\sm +(\lambda-1)\cdot\left(\tp(B,B)-\tp(A,A)\right),
\end{align*}
where $\tp(X,Y)=\frac{1}{V}\cdot\sum_{i\in\Gamma^Y} \left(V_i^X\right)$ is the proportion of voters with these two properties: voting for party $X$ and living in a district won by party $Y$.  Using the relations:
$$\tp(A,A)+\tp(B,A) = \frac 12 + \sm,\quad \tp(B,A)+\tp(B,B) = \frac 12 - \vm,$$
we see that
\begin{equation}\label{E:doggie}\tp(B,B)-\tp(A,A) = -\vm - \sm,\end{equation}
which completes the proof.
\end{proof}
In particular, $$\EG_2 = \sm - 3\cdot\vm,$$ which is a \emph{winner's triple bonus}.  Proportionality advocates prefer to step in the other direction towards the choice $\lambda=0$, which corresponds to counting only losing votes: $$\EG_0 = \sm - \vm.$$  But legal arguments based on the principal $\EG_0\cong 0$ would not hold up because the Supreme Court has repeatedly ruled that political parties do not have a right to proportional representation.

A district with competitiveness $C_i = \frac{1}{\lambda+1}$ is neutral -- it contributes zero to the efficiency gap.  Although this cutoff value now depends on $\lambda$, the logic remains the same: the $\EG_\lambda$ measurement penalizes the winner of a competitive district ($0<C_i<\frac{1}{\lambda+1}$) for cracking the loser, and it penalizes the loser of a non-competitive district ($\frac{1}{\lambda+1}<C_i<1$) for packing the winner.  In fact, the weighted version of Proposition~\ref{P:cover} is:
\begin{prop}[Cover~\cite{Cover}]
$$\EG=\left(\frac 12 - \sm \right)
 \left(\frac{\lambda+1}{2}\cdot C^B-\frac 12\right) - \left(\frac 12 + \sm\right)\left(\frac{\lambda+1}{2}\cdot C^A-\frac 12\right).$$
\end{prop}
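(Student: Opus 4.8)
The plan is to mirror the proof of Proposition~\ref{P:cover} (the $\lambda=1$ case), replacing the per-district waste-gap with its weighted analogue. First I would use the equal turnout hypothesis ($V_i = V/n$) to rewrite the weighted efficiency gap as an average over districts,
$$\EG_\lambda = \frac{W^B(\lambda)-W^A(\lambda)}{V} = \frac{1}{n}\sum_{i=1}^n \frac{W^B_i(\lambda)-W^A_i(\lambda)}{V_i}.$$
This reduces the problem to understanding a single summand.

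The key step is to compute the single-district waste-gap $\bigl(W^B_i(\lambda)-W^A_i(\lambda)\bigr)/V_i$ and express it through $C_i$. For a district $i\in\Gamma^B$ won by $B$, the weighted definition gives $W^A_i(\lambda)=V^A_i$ (losing votes) and $W^B_i(\lambda)=\lambda\bigl(V^B_i-\tfrac{V_i}{2}\bigr)$ (excess votes). Writing the vote shares as $V^B_i/V_i=\tfrac{1+C_i}{2}$ and $V^A_i/V_i=\tfrac{1-C_i}{2}$ — valid because $B$ won, so $C_i=(V^B_i-V^A_i)/V_i$ — a short calculation should collapse the waste-gap to $\tfrac{\lambda+1}{2}C_i-\tfrac12$, and the symmetric case $i\in\Gamma^A$ should yield exactly the negative of this. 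I expect this identity to be the main obstacle: not because it is deep, but because one must correctly decide which party's excess votes carry the weight $\lambda$ and verify that the two cases produce opposite signs (this sign-flip is precisely what makes $C_i=1/(\lambda+1)$ the neutral competitiveness noted just above the statement).

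With the per-district formula established, the remainder is the same bookkeeping as in the unweighted proof. I would split the sum over $\Gamma^A$ and $\Gamma^B$, factor out the seat counts, and rewrite each block-sum as a seat share times an average over that block, using $S=n$. Since $\AVG$ is linear and $C^P=\AVG\{C_i\mid i\in\Gamma^P\}$, the average of $\tfrac{\lambda+1}{2}C_i-\tfrac12$ over $i\in\Gamma^P$ equals $\tfrac{\lambda+1}{2}C^P-\tfrac12$. Substituting the seat shares $S^B/S=\tfrac12-\sm$ and $S^A/S=\tfrac12+\sm$ then yields the claimed formula. As a sanity check I would set $\lambda=1$ and confirm that it reduces to Proposition~\ref{P:cover}.
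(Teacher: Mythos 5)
Your proposal is correct and follows exactly the paper's own route: the paper likewise computes the single-district waste-gap as $\frac{\lambda+1}{2}C_i-\frac 12$ for $i\in\Gamma^B$ (and its negative for $i\in\Gamma^A$) and then says ``the results now follow as in the proof of Proposition~\ref{P:cover},'' which is precisely the bookkeeping you describe. The per-district identity you flag as the main step checks out: for $i\in\Gamma^B$ one gets $\lambda\cdot\frac{C_i}{2}-\frac{1-C_i}{2}=\frac{\lambda+1}{2}C_i-\frac12$, as claimed.
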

\begin{proof} The waste-gap in a single district is:
$$\frac{W^B_i-W^A_i}{V_i} =\begin{cases} \frac{\lambda+1}{2}C_i-\frac 12 &\text{ if } i\in\Gamma^B \\ -\frac{\lambda+1}{2}C_i+\frac 12 &\text{ if } i\in\Gamma^A. \end{cases}$$
The results now follow as in the proof of Proposition~\ref{P:cover}.
\end{proof}

For any fixed values of $\lambda$ and $\sm$, the equation $\EG_\lambda=0$ is graphed in the $C^AC^B$-plane as the line through $\left(\frac{1}{\lambda+1},\frac{1}{\lambda+1}\right)$ with slope=$\frac{0.5+\sm}{0.5-\sm}=\frac{S^A}{S^B}$.  In particular, Equation~\ref{E:eg_comp} remains true for arbitrary values of $\lambda$.

Several authors consider $\EG_1$'s cutoff value of $C_i=1/2$ to be a flaw, complaining that it ``fetishizes three-to-one landslide districts''\cite{Duchin}.  We agree that $\EG_2$'s cutoff value of $C_i=1/3$ seems to be more reasonable -- a district should probably count as packed if it's won with between $66\%$ and $75\%$ of the vote, but we acknowledge that there's no canonical choice for the cutoff.  \emph{Some} cutoff is necessary for any formula that is additive over the districts and attempts to penalize both packing (losing by too much) and cracking (winning by too little).

In summary, $\EG_2$ yields a more reasonable competitiveness cutoff than $\EG_1$, and its ``winner's triple bonus'' might appeal to advocates of competitive elections, but it exacerbates the main problem: $\EG_1$ is worthless for elections in which one party received more than $75\%$ of the vote, while $\EG_2$ is worthless above $66\%$.


\section{The relative efficiency gap}
It is possible to solve all of these problems at once with an elegant \emph{relative} version of the efficiency gap formula.
This idea was first proposed by Nagle in~\cite{Nagle}.  We will call it the \emph{relative efficiency gap}:
$$\REG_\lambda = \frac{W^B(\lambda)}{V^B}-\frac{W^A(\lambda)}{V^A}\in[-1,1].$$
It measures the difference between the \emph{proportions} of their votes that the two parties wasted.  The idea is to require the parties to waste about the same proportion of their votes rather than the same number of votes.  For example, it is compelling to regard Plan 1 of Figure~\ref{F:Examp} as fair because the parties waste the same proportion of their votes (even though they don't waste the same number of votes).

As before, we allow $\lambda$ to be an arbitrary positive constant, even though we think $\lambda\in\{1,2\}$ are the only important cases.

Nagle called $\EG_\lambda$ ``party-centric'' and $\REG_\lambda$ ``voter-centric.''  Making the efficiency gap small ``equalizes the aggregate harm done to a party," whereas making the relative efficiency gap small ``equalizes the average effectiveness of voters of like mind.''  In other words, $\REG_\lambda\cong 0$ means that a randomly selected voter from party A is just as likely to have wasted his/her vote as a randomly selected voter from party B.  This distinction is legally relevant because the Constitution grants rights to individuals not parties.

The global formula for $\REG_\lambda$ depends not only on $\vm$ and $\sm$, but also on the competitiveness measurement $C$ defined in Section 6:
\begin{prop}[Cover~\cite{Cover}]
$$\REG_\lambda = \frac{\sm -\left(\lambda + (1- \lambda)C\right)\vm }{2\left(\frac 12 +\vm\right)\left(\frac 12 - \vm\right)}$$
\end{prop}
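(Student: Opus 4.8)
The plan is to start from Nagle's formula for the total wasted votes, Equation~\eqref{E:waste_lambda_total}, but now divide by $V^P$ instead of by $V$. Writing $V^A = V\left(\frac12+\vm\right)$, $V^B = V\left(\frac12-\vm\right)$, $\frac{S^A}{S}=\frac12+\sm$, $\frac{S^B}{S}=\frac12-\sm$, and $\sum_{i\in\Gamma^P}V_i^P = V\cdot\tp(P,P)$, I obtain
$$\frac{W^A(\lambda)}{V^A} = 1 - \lambda\cdot\frac{\frac12+\sm}{2\left(\frac12+\vm\right)} + (\lambda-1)\cdot\frac{\tp(A,A)}{\frac12+\vm},$$
and the analogous expression for $W^B(\lambda)/V^B$ is obtained by the substitution $\vm\mapsto-\vm$, $\sm\mapsto-\sm$, $\tp(A,A)\mapsto\tp(B,B)$. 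Subtracting, the two constant terms $1$ cancel, leaving a $\lambda$-term and a $(\lambda-1)$-term to simplify separately.

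The $\lambda$-term is routine and parallels the $\EG$ case: a common-denominator computation gives
$$\frac{\frac12-\sm}{2\left(\frac12-\vm\right)} - \frac{\frac12+\sm}{2\left(\frac12+\vm\right)} = \frac{\vm-\sm}{2\left(\frac12+\vm\right)\left(\frac12-\vm\right)},$$
so the $\lambda$-contribution to $\REG_\lambda$ is $\frac{\lambda(\sm-\vm)}{2(\frac12+\vm)(\frac12-\vm)}$.

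The crux is the $(\lambda-1)$-term, which carries the factor $\frac{\tp(B,B)}{\frac12-\vm} - \frac{\tp(A,A)}{\frac12+\vm}$. Placing this over the common denominator $(\frac12+\vm)(\frac12-\vm)$, its numerator is $\frac12\left(\tp(B,B)-\tp(A,A)\right) + \vm\left(\tp(A,A)+\tp(B,B)\right)$. The difference $\tp(B,B)-\tp(A,A)=-\vm-\sm$ is already recorded in Equation~\eqref{E:doggie}, but the \emph{sum} $\tp(A,A)+\tp(B,B)$ is new, and this is precisely where the competitiveness $C$ must enter. Using the equal-turnout hypothesis $V_i=V/n$ together with $C_i=(V_i^A-V_i^B)/V_i$ on $\Gamma^A$ and $C_i=(V_i^B-V_i^A)/V_i$ on $\Gamma^B$, I would establish the identity
$$C = \tp(A,A) - \tp(B,A) + \tp(B,B) - \tp(A,B).$$
Since the four quantities $\tp(X,Y)$ sum to $1$, this rearranges to $\tp(A,A)+\tp(B,B)=\frac{1+C}{2}$; substituting both facts collapses the $(\lambda-1)$-numerator to $\frac{\vm C-\sm}{2}$.

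Finally I would add the two contributions over the shared denominator $2\left(\frac12+\vm\right)\left(\frac12-\vm\right)$ and simplify the combined numerator $\lambda(\sm-\vm)+(\lambda-1)(\vm C-\sm)$, which telescopes to $\sm-\left(\lambda+(1-\lambda)C\right)\vm$, yielding the claimed formula. The main obstacle is the identification $\tp(A,A)+\tp(B,B)=\frac{1+C}{2}$: unlike the earlier propositions, $\REG_\lambda$ genuinely depends on $C$, so this step is what forces the competitiveness into the global formula, and deriving the competitiveness identity from its definition is the one place where the equal-turnout hypothesis is essential.
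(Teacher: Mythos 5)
Your proposal is correct and follows essentially the same route as the paper's own proof: divide Equation~\eqref{E:waste_lambda_total} by $V^P$, subtract, split into the $\lambda$- and $(\lambda-1)$-terms, and evaluate the latter's numerator via $\tp(B,B)-\tp(A,A)=-\vm-\sm$ together with $\tp(A,A)+\tp(B,B)=\tfrac12(1+C)$. The only difference is that you spell out the derivation of the competitiveness identity (via the four $\tp(X,Y)$ summing to $1$ under equal turnout), which the paper merely asserts in a sentence; all your intermediate computations check out.
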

\begin{proof}
Equation~\ref{E:waste_lambda_total} gives:
$$\frac{W^P(\lambda)}{V^P} = 1-\frac{\lambda}{2}\cdot\frac{S^P/S}{V^P/V} + (\lambda - 1)\frac{\tp(P,P)}{V^P/V},$$
with $\tp$ defined as the proof of Lemma~\ref{L:geneg}.  Subtracting gives:
$$\REG_\lambda = \frac{\lambda}{2}\left(\frac{\sm-\vm}{\left(\frac 12 +\vm\right)\left(\frac 12 - \vm\right)}\right) + (\lambda-1)\cdot\frac{\color{red}\left(\frac 12 + \vm\right)\cdot\tp(B,B)-\left(\frac 12 - \vm\right)\cdot\tp(A,A)\color{black}}{\left(\frac 12 +\vm\right)\left(\frac 12 - \vm\right)}.$$
The {\color{red} red numerator} above equals:
$$\frac 12 (\underbrace{\tp(B,B)-\tp(A,A)}_{=-\vm-\sm\text{ by Eq.~\ref{E:doggie}}}) + \vm\cdot (\underbrace{\tp(A,A)+\tp(B,B)}_{=\frac 12(C+1)}),$$
where the second underscored equality recognizes that the proportion of all votes that were cast for winning candidates depends linearly on the competitiveness.  Making these substitution and simplifying completes the proof.
\end{proof}

We first consider the case $\lambda=1$, in which the dependence on $C$ disappears:
\begin{equation}\label{reg2}\REG_1
   = \frac 12\left(\frac{\sm-\vm}{\left(\frac 12 +\vm\right)\left(\frac 12 - \vm\right)}\right).\end{equation}
Figure~\ref{F:graph} shows the graph of $\REG_1$ over the domain $\mP$.  The cyan line in the figure shows the slice $\vm=0$, along which $\REG_1$ is a linear function of $\sm$.  The green lines in the figure show that
$$\REG_1=0\iff \vm=\sm,$$
so the $\REG_1\cong 0$ principal is consistent with proportionality.

Notice that $\REG_1$ is defined on all of the boundary of $\mP$ except the two points $(\vm,\sm)=\pm(.5,.5)$ illustrated in yellow.  What is the limit of $\REG_1$ as $(\vm,\sm)$ approaches either of these two points along a line in $\mP$?  It depends on the choice of the line.  All values between $-.5$ and $.5$ can be obtained as limits (including the value $0$ along the green line).

\begin{figure}[ht!]
   \scalebox{.20}{\includegraphics{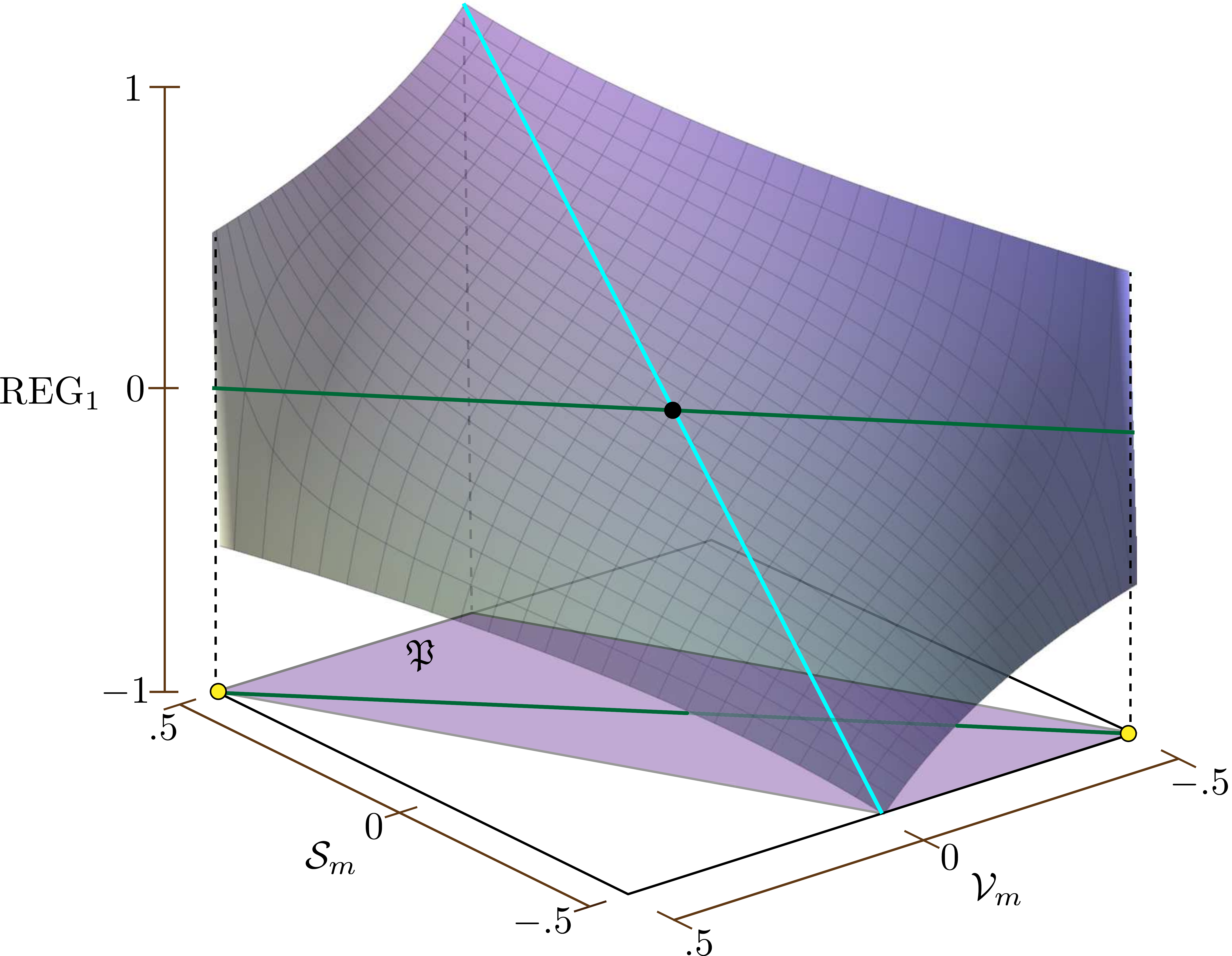}}
\caption{the graph of $\REG_1$ over the domain $\mP$}\label{F:graph}
   \end{figure}

We saw in Equation~\ref{E:girraf} that $\EG$ is useless for lopsided elections, but $\REG_1$ fares much better in this regard, provided the number of districts is reasonably large (Wisconsin has $99$ state assembly districts).  A tiny party without enough votes to capture a single seat would be guaranteed to waste all of their votes regardless of the map. But as soon as both parties have enough votes to capture a seat or two each, it becomes possible to
imagine voters distributed between districts so that the two parties waste about the same proportion of their votes.  The lopsided election problem disappears entirely in the limit as $n\rightarrow\infty$, which corresponds to graphing $\REG_1$ over the domain $\mP$ as done in Figure~\ref{F:graph}.  $\REG_1$ attains all valued between $-.5$ and $.5$ along every ``fixed $\vm$'' line in $\mP$.

The other natural choice is $\lambda=2$:
\begin{equation}\label{E:REG2}\REG_2 = \frac{\sm - (2-C)\vm}{2\left(\frac 12 +\vm\right)\left(\frac 12 - \vm\right)}.\end{equation}
In particular:
\begin{equation}\label{REG2}\REG_2=0 \iff \sm = (2-C)\vm,\end{equation}
which elegantly interpolates between proportionality (in the maximally uncompetitive $C=1$ extreme) and the winner's double bonus principal (in the maximally competitive $C=0$ extreme).  So the winner gets an up-to-double bonus, but only when the districts are competitive enough that the outcome could be attributed in part to random luck.

Figure~\ref{F:graph2} shows the graph of $\REG_2$.  The dependence on $C$ makes it a multi-valued function over $\mP$; the heights of the silver bottom graph and the gold top graph over a point $(\vm,\sm)\in\mP$ are respectively its infimum and supremum among all elections with that outcome.  A section of the gold graph is cut away to make the silver graph below it visible.

The cyan line in the figure shows something apparent from Equation~\ref{E:REG2}: along the slice $\vm=0$, $\REG_2$ is single-valued (which means the infimum equals the supremum), and is a linear function of $\sm$.

\begin{figure}[ht!]
   \scalebox{.20}{\includegraphics{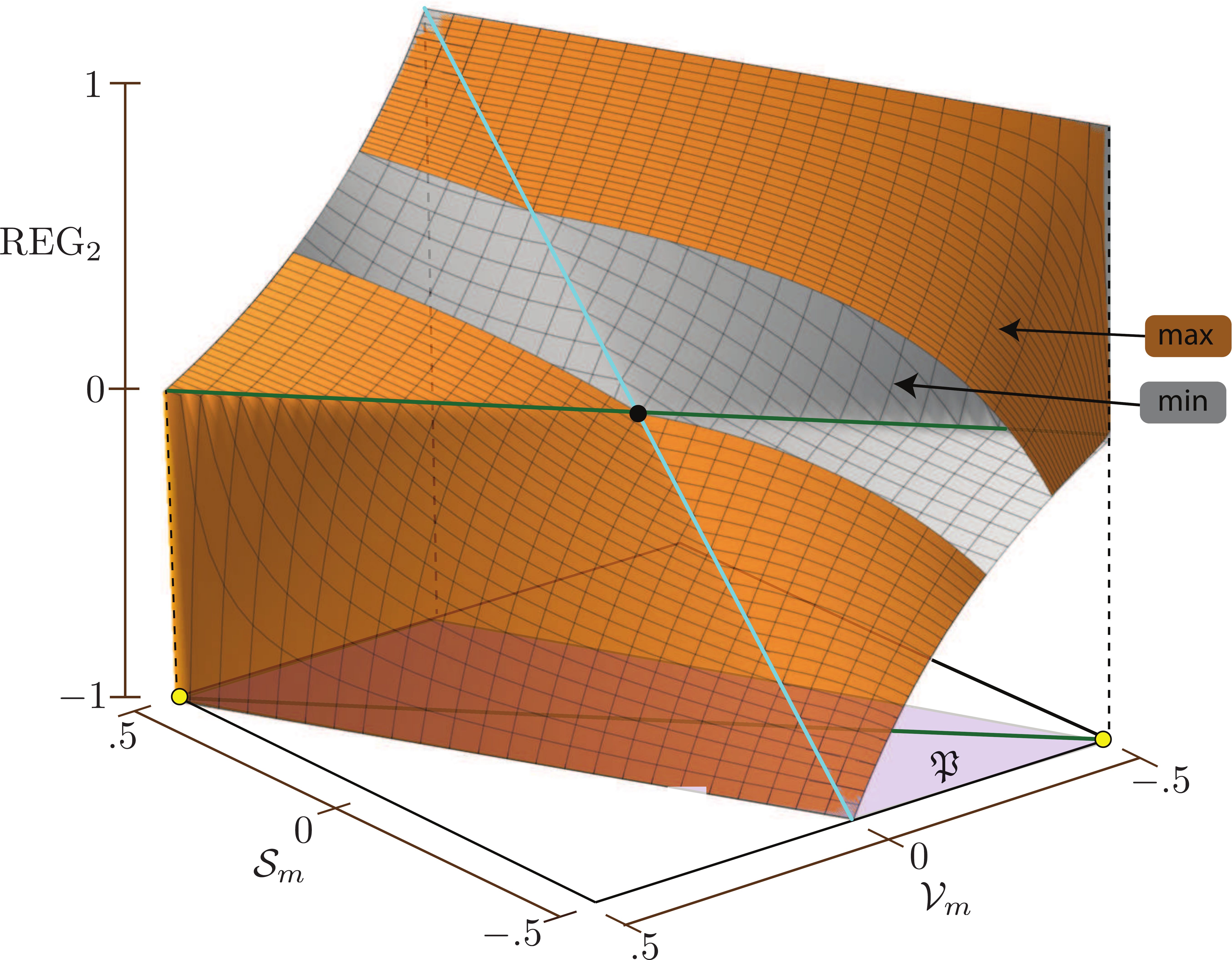}}
\caption{the graph of $\REG_2$ over the domain $\mP$}\label{F:graph2}
   \end{figure}

In fact $\REG_2$ is also single-valued over the boundary of $\mP$, which is why the gold and silver graphs seem to join together along their edges.  To verifying this (and also to create Figure~\ref{F:graph2}) one requires the following technical lemma about the range of possible $C$-values corresponding to any point $(\vm,\sm)\in\mP$, whose proof we leave to the reader:

\begin{lem} For any fixed $\sm\in\left[-\frac 12,\frac 12\right]$, the pair $(\vm,C)$ lies in the tilted rectangle pictured in Figure~\ref{F:Cpar}.
\end{lem}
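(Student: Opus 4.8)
The plan is to translate the problem into per-district vote proportions and then read off the constraints as linear inequalities. Under the equal turnout hypothesis every district has $V_i=V/n$, so I would set $x_i=V_i^A/V_i\in(0,1)$ (with $x_i\neq\tfrac12$ since there are no tied districts). The competitiveness is then $C_i=|2x_i-1|$, so that $x_i-\tfrac12=\tfrac12 C_i$ when $i\in\Gamma^A$ and $x_i-\tfrac12=-\tfrac12 C_i$ when $i\in\Gamma^B$. Averaging over the $n$ districts gives the two key identities
$$\vm=\frac{1}{n}\sum_{i=1}^n\left(x_i-\tfrac12\right)=\frac{1}{2n}\left(\sum_{i\in\Gamma^A}C_i-\sum_{i\in\Gamma^B}C_i\right),\qquad C=\frac{1}{n}\sum_{i=1}^n C_i.$$

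Next I would introduce the sheared coordinates $u=C+2\vm$ and $w=C-2\vm$. The identities above collapse to $u=\tfrac{2}{n}\sum_{i\in\Gamma^A}C_i$ and $w=\tfrac{2}{n}\sum_{i\in\Gamma^B}C_i$; that is, $u$ and $w$ record (twice) the separate contributions to the average competitiveness coming from the $A$-won and $B$-won districts. Since each $C_i\in(0,1)$ and the seat counts are pinned by the fixed value of $\sm$ — namely $|\Gamma^A|=n\ts=n(\tfrac12+\sm)$ and $|\Gamma^B|=n(\tfrac12-\sm)$ — the admissible ranges are exactly $0<u<1+2\sm$ and $0<w<1-2\sm$. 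This is an axis-aligned rectangle in the $(u,w)$-plane, and pulling back through the affine change of variables $(u,w)\leftrightarrow(\vm,C)$ (whose level lines have slopes $\pm2$) produces the tilted rectangle of Figure~\ref{F:Cpar}, whose four edges are the lines $C=\pm2\vm$ and $C=1\pm2\sm\mp2\vm$.

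Finally I would verify that the whole rectangle is attained, not merely contained: given any $(u,w)$ in the open rectangle, assigning the common value $C_i=un/(2|\Gamma^A|)\in(0,1)$ to every $A$-won district and $C_i=wn/(2|\Gamma^B|)\in(0,1)$ to every $B$-won district realizes the target $(\vm,C)$ by an actual election. The main point to watch is the interplay between the fixed-$\sm$ constraint and the integrality of the seat counts: the clean rectangle emerges in the $n\to\infty$ limit used to graph $\REG_2$ over $\mP$, while for finite $n$ one obtains only the lattice of achievable outcomes densely filling it. The degenerate endpoints $\sm=\pm\tfrac12$, where the factor $1\mp2\sm$ vanishes and the rectangle collapses to a segment, should be flagged as exactly the cases in which one party wins every seat.
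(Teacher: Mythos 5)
Your argument is correct, and in fact the paper offers no proof to compare it against --- the lemma is stated with ``whose proof we leave to the reader,'' so your write-up genuinely fills a gap. The shear coordinates $u=C+2\vm=\tfrac{2}{n}\sum_{i\in\Gamma^A}C_i$ and $w=C-2\vm=\tfrac{2}{n}\sum_{i\in\Gamma^B}C_i$ are exactly the right way to decouple the constraints: each is a sum of $|\Gamma^A|=n(\tfrac12+\sm)$ (resp.\ $|\Gamma^B|=n(\tfrac12-\sm)$) competitiveness values, giving an axis-aligned box whose preimage is the tilted rectangle with edges $C=\pm2\vm$ and $C=1\pm2\sm\mp2\vm$, and your constant-$C_i$ construction shows the region is essentially filled rather than merely containing all outcomes. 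Two small points to tighten. First, $C_i\in(0,1]$ rather than $(0,1)$: unanimous districts are permitted (Plan~1 of Figure~\ref{F:Examp} has every $C_i=1$), so the two upper edges of the rectangle, meeting at the vertex $(\vm,C)=(\sm,1)$, are attained exactly by all-unanimous elections and should be included; only the lower vertex $(0,0)$ is excluded by the no-ties assumption. Second, the density remark should be phrased in terms of $V\to\infty$ with $n$ (hence $\sm$) held fixed --- for a fixed district count the achievable $C_i$ are rationals with denominator $V/n$, and it is growing turnout, not growing $n$, that makes them dense in $(0,1]$; letting $n\to\infty$ changes $\sm$ and is only needed to reach irrational values of $\sm$ in the lemma's hypothesis. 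Neither point affects the containment statement, which is the content actually used to establish that $\REG_2$ is single-valued on the boundary of $\mP$.
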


\begin{figure}[ht!]
   \scalebox{.19}{\includegraphics{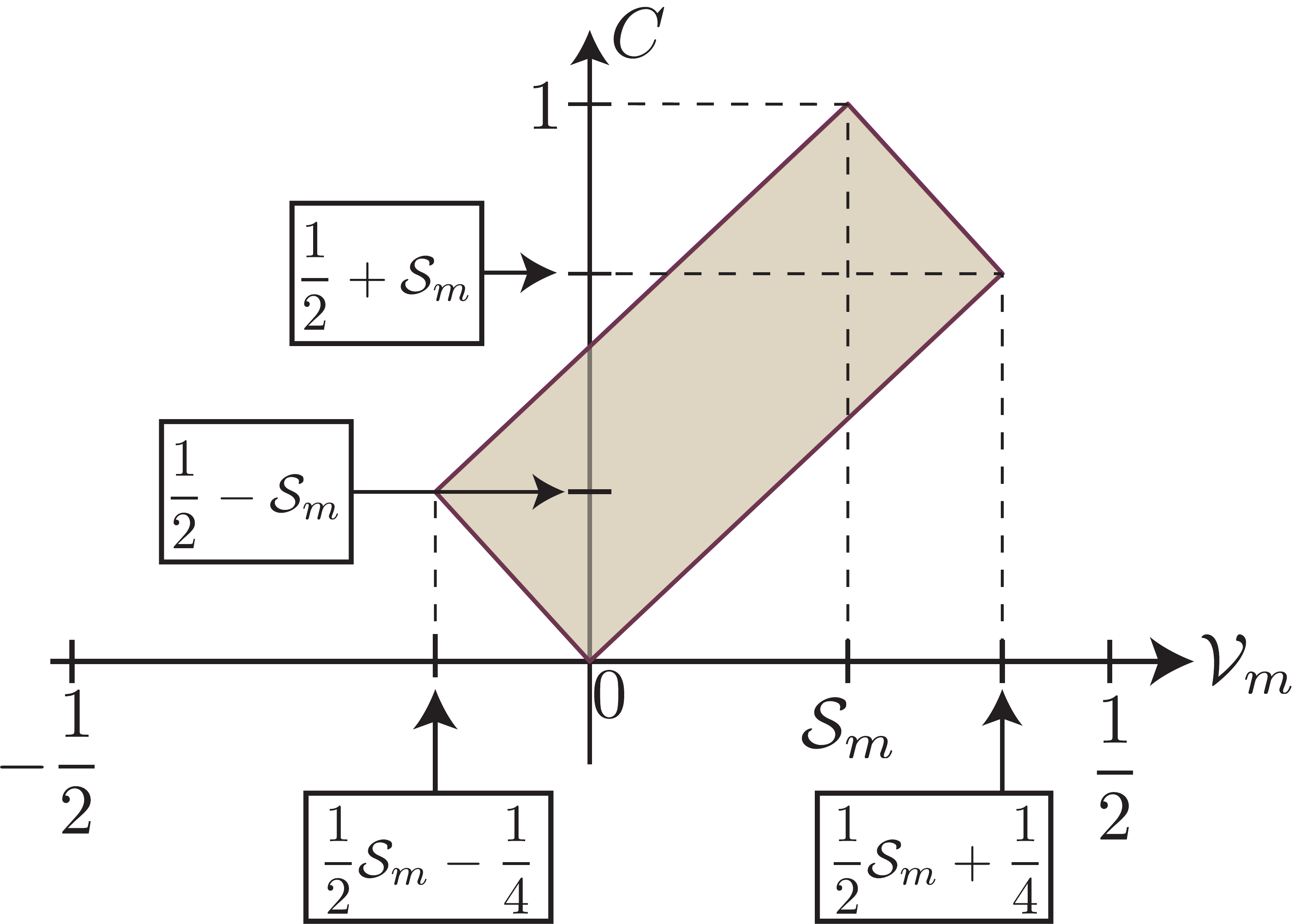}}
\caption{The constrains on $\{\vm,\sm,C\}$}\label{F:Cpar}
   \end{figure}


\section{Comparing the measurements as functions on $\mP$}
Among the gerrymander detection measurements considered in this paper, the most important are: $2\cdot\EG$, $\REG_1$ and $\REG_2$ (here we doubled $\EG$ so that all three measurements have the same range $[-1,1]$).  In this section, we illustrate the differences between these measurements considered as functions on $\mP$.

First we highlight what they have in common: the three measurements agree when the voters are split $50$-$50$ between the two parties (when $\vm=0$), as illustrated by the cyan line in Figures~\ref{F:graph} and~\ref{F:graph2}.  In Wisconsin, $\vm$ was close enough to zero that the three methods nearly agree.

The methods disagree in lopsided situations; the limit version of this statement is the observation that they disagree along the boundary of $\mP$.  The measurements' values along the edges of $\mP$ are given in Figure~\ref{F:boundary}.  In our opinion, $\EG$ and $\REG_2$ get it right along the left and right edges: both equal $1$ along the left edge (where party A has the most seats possible for their given vote share) and they equal $-1$ along the right edge (where they have the fewest).

The yellow points $(\vm,\sm)=\pm(.5,.5)$ in Figure~\ref{F:boundary} are non-removable singularities for $\REG_1$ and $\REG_2$ because the limits along the horizontal and vertical edges don't match up at these points.  Because of this feature, they give more reasonable results than $\EG$ near the yellow points.  Consider outcomes near $(.5,.5)$, which means party A receives almost all the votes and wins almost all the seats.  In such a case, $2\cdot \EG\cong-1$ accuses the minority blue party of manipulating the map, $\REG_1\in[-.5,.5]$ is willing to accuse either party of cheating, and $\REG_2\in[-1,0]$ accuses either the blue party or nobody of cheating.  It might not be clear what's fair here, but $\EG$ clearly gets it wrong.

\begin{figure}[ht!]
   \scalebox{.17}{\includegraphics{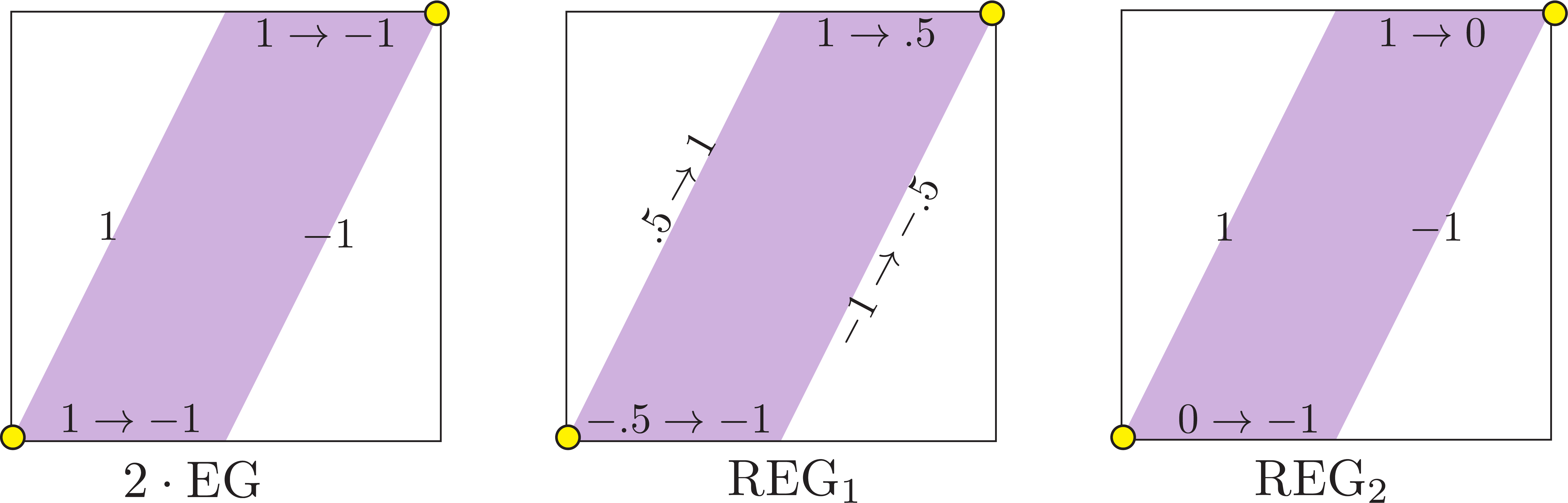}}
\caption{Measurement values along the edges of $\mP$}\label{F:boundary}
   \end{figure}

In practice what matters is not the extreme cases, but rather the answer to the question: which outcomes are considered gerrymandered?  McGhee and Stephanopolous proposed the cut-off $|2\cdot\EG|>.16$, which we believe is equally reasonable for $\REG_1$ and $\REG_2$.  Figure~\ref{F:compare} shows the set of outcomes that are considered ``allowable'' (not gerrymandered) by each of the three measurements.  The measurement $2\cdot\EG$ allows outcomes between the two red lines, while $\REG_1$ allows outcomes between the two purple curves.  The measurement $\REG_2$ \emph{definitely} allows outcomes between the two green regions, and \emph{possibly} allows outcomes in the green regions (but only if the election was sufficiently competitive).  The key observation is that $\REG_2$ beautifully interpolates between the other two measurements.  The region deemed allowable by both $\EG$ and $\REG_1$ is \emph{definitely} allowable by $\REG_2$, while most of the region allowed by only one of $\{\EG,\REG_1\}$ is \emph{possibly} allowed by $\REG_2$.

\begin{figure}[ht!]
   \scalebox{.20}{\includegraphics{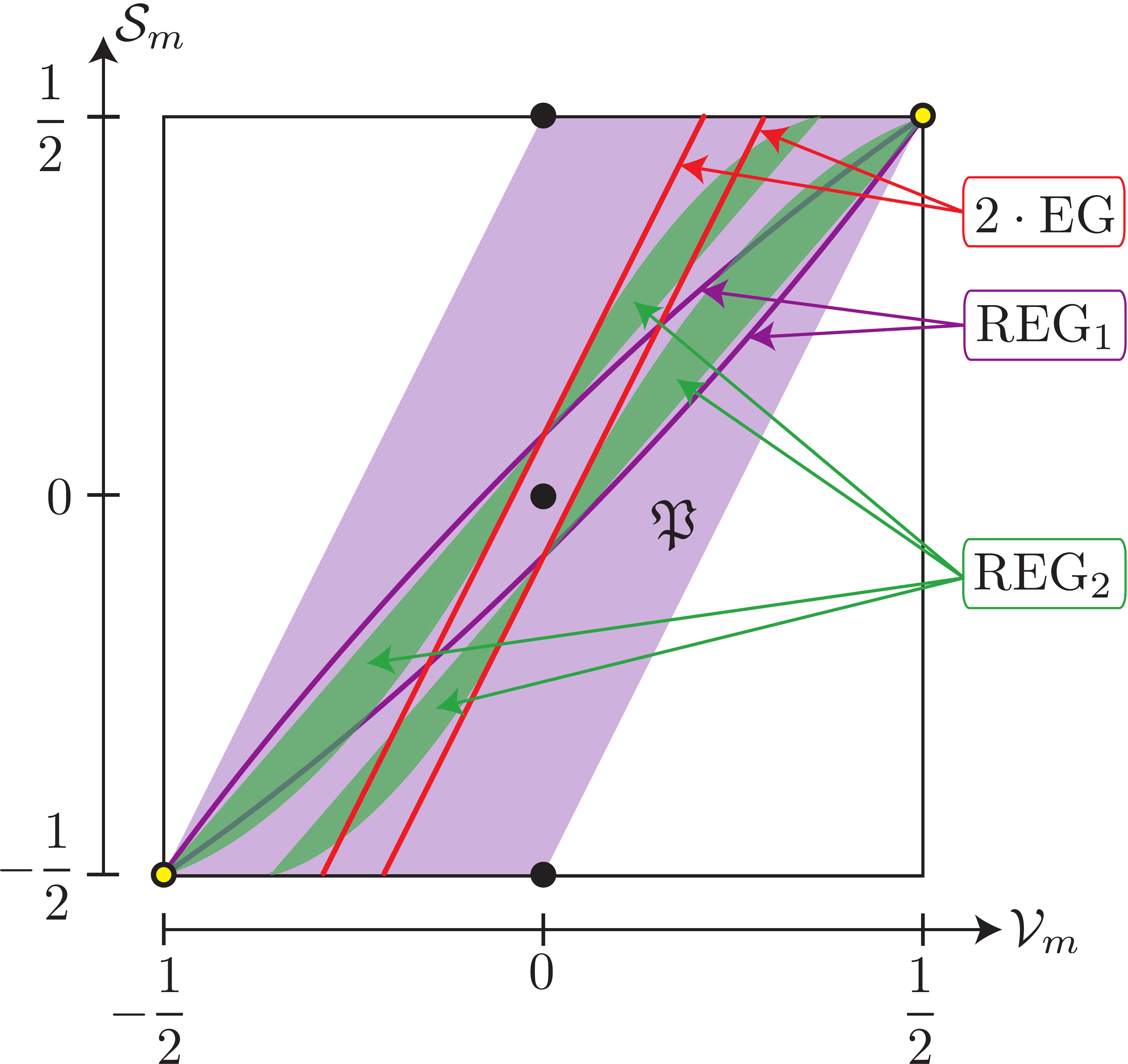}}
\caption{Contours for the cutoffs $\pm.16$}\label{F:compare}
   \end{figure}

\section{Which measurement is best?}
In this section, we enumerate some reasons why we think $\REG_2$ is ultimately preferable to the other two measurements:
\begin{enumerate}
\item It is a significant weakness that $\EG$ is useless in lopsided elections.  In fact we're not even considering $\EG_2$ as a contender because it exacerbates this weakness.  The relative measurements $\REG_1$ and $\REG_2$ both essentially solve this problem.
\item As discussed in the previous section, we think $\REG_2$ has the most reasonable outputs for extreme elections (along the boundary of $\mP$).
\item The result from Section 6 that $\EG$ equals ``weighted differential competitiveness'' would be a strength, but the weighting adds an unpleasant technical complication.  A judge might be more impressed (and less confused) by evidence of an \emph{unweighed} competitiveness gap, especially if this evidence is presented as \emph{independent} from whichever wasted-vote measurement is used.
\item As discussed in Section 6, the choice $\lambda=2$ (rather than $\lambda=1$) is preferable because a district competitiveness cutoff of $1/3$ is more reasonable than $1/2$.  Even though $\REG_1$ (respectively $\REG_2$) is not additive over the districts, it is still true that any district won with $75\%$ (respectively $66\%$) of the vote is neutral in the sense that both parties have the same number of wasted votes.  We consider the $66\%$ option more reasonable.
\item The choice $\lambda=2$ (rather than $\lambda=1$) is preferable because it generalizes correctly to multi-party elections.  If there is only a small ``spoiler'' third party, then a partisan map maker has incentive to carve out districts in which the preferred party has over $50\%$ of the vote (as modeled in~\cite{McGhee2}).  But to incorporate general multi-party elections, including those with many parties of roughly the same size, we believe that the only reasonable definition of an excess vote is: a vote beyond the number received by the second place party in the district.  This definition reduces to $\lambda=2$ when there are only two parties.
\item Even though $\REG_1$ is a nonlinear function of $\vm$ and $\sm$, the fact that if vanishes when $\vm=\sm$ might lead a judge to conclude that it ``just measures the lack of proportionality,''  which would be legally problematic because parties do not have a right to proportional representation.  The measurements $\EG$ and $\REG_2$ are immune to this potential concern.
\item $\EG$ is battle tested.  If one wishes to switch to a relative measurement at this point, $\REG_2$ is preferable to $\REG_1$ because it is a smaller step away from $\EG$.  The illustrations of the previous section show senses in which $\REG_2$ interpolates between $\EG$ and $\REG_1$.
\end{enumerate}
\section{McGhee's efficiency principle}
McGhee compared $\EG$, $\REG_1$ and $\REG_2$ in~\cite{McGhee2}, but he showed much less enthusiasm than us for $\REG_2$, primarily because it failed his \emph{efficiency principle: a gerrymander detection  measurement must increase when party A increases its seat share without any corresponding increase in its vote share.}

If the measurement is a smooth function of $(\vm,\sm)\in\mP$, McGhee's efficiency principle is equivalent to having a positive partial derivative with respect to $\sm$.  The global formulas for $\EG$ and $\REG_1$ have this property, but these formulas were derived assuming the equal turnout hypothesis.  McGhee generalized these formulas to remove this hypothesis, and he showed that the generalizations failed his efficiency principle.  The basic problem is this: a partisan map making team will try to (if possible) carve out small-turnout districts in which their preferred party wins ``at small cost,'' and none of $\EG$, $\REG_1$ or $\REG_2$ would penalize them for doing so.  Veomett studied how extremely $EG$ fails the principle in the presence of malapportionment (unequal voter turnout)~\cite{Veomett}, while McGhee observed in~\cite{McGhee2} that the formula $EG=\sm - 2\cdot\vm$ remains true for malapportioned elections, provided one adopts a reasonable alternative definition of ``excess vote.''

Malapportionment is perhaps a minor issue because a partisan map maker has limited ability to control or take advantage of it.  A more significant issue is that $\REG_2$ fails the efficiency principle even assuming the equal turnout hypothesis.  But we agree with Cover's argument in~\cite{Cover} that this is not necessarily a problem. $\REG_2$ only allows a partisan map making team to improve their seat share (without increasing vote share) if they make the districts more competitive and thereby risk losing seats to bad luck.

To better understand McGhee's efficiency principle, we next prove that it essentially requires the measurement to depend only on $(\vm,\sm)$.  For this, we require a formal definition.

\begin{definition}\label{D:gdm} A \textbf{gerrymander detection measurement} is a function, $\mE$, from the set of all $n$-tuples of pairs of positive integers $$\mathcal{D}=\left((V^A_1,V^B_1),...,(V^A_n,V^B_n)\right)$$ (for all $n\in\mathbb{N}$) to $[-1,1]$, satisfying:
\begin{itemize}
\item[H1:] The indexing of the districts is irrelevant; that is, $\mE(\mD)$ is unchanged when the $n$ elements of $\mD$ are permuted.
\item[H2:] The parties are treated equally; that is, $\mE(\mD)=-\mE(\mD')$ if $\mD'$ is obtained from $\mD$ by swapping the $A$- and $B$-components of each of the $n$ elements.
\item[H3:] Voter scalability; that is, $\mE(m\cdot\mD)=\mE(\mD)$, where $m\cdot\mD$ denote the result of multiplying both the $A$- and $B$- components of all $n$ elements of $\mD$ by the arbitrary positive integer $m$.
\item[H4:] District scalability; that is, for all $m\in\mathbb{N}$, $\mE(\mD)=\mE\underbrace{\left(\mD\cup\cdots\cup\mD\right)}_{\text{$m$ copies}}$, where this $mn$-tuple has $m$ copies of each of the $n$ elements of $\mD$.
\end{itemize}
\end{definition}

This list of properties is very minimal and is satisfied by all measurements considered in this paper.  McGhee's efficiency principle can be precisely formulated as follows: If $\mD_1, \mD_2$ have same number of districts and the same value of $\vm$, but $\mD_2$ has a larger value of $\sm$, then $\mE(\mD_2)>\mE(\mD_1)$.

If $\mE$ is a gerrymander detection measurement, define $\mE_{\text{min}},\mE_{\text{max}}:\mP\rightarrow[-1,1]$ to be like the silver and gold functions graphed in Figure~\ref{F:graph2}.  More precisely, if $(a,b)\in\mP$ is has rational coordinates, then
$$\mE_{\text{min}}(a,b) = \inf\{\mE(\mD)\mid \mD\text{ is such that }\vm=a\text{ and }\sm=b\},$$
while the value of $\mE_{\text{min}}$ on an irrational point is defined as its $\liminf$ on converging sequences of rational points. The function $\mE_{\text{max}}$ is defined analogously.

\begin{prop}\label{P:axium}If a gerrymander detection measurement $\mE$ satisfies McGhee's efficiency principle, then the region between the graphs of $\mE_{\text{min}}$ and $\mE_{\text{max}}$ has zero volume.
\end{prop}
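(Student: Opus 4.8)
The plan is to show that McGhee's efficiency principle, even though it only compares data sets with a common number of districts, actually forces $\mE$ to be strictly increasing in $\sm$ along every ``vertical'' family of outcomes (fixed $\vm$), and then to extract the volume statement from this monotonicity by a slicing argument together with Fubini's theorem.

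First I would globalize the efficiency principle using the scalability axiom H4. Given two data sets $\mD_1,\mD_2$ with the same value of $\vm$ and with $\sm(\mD_1)<\sm(\mD_2)$ but different district counts $n_1\neq n_2$, I replace $\mD_1$ by its $n_2$-fold replication and $\mD_2$ by its $n_1$-fold replication. Each replication preserves $\vm$ and $\sm$ (duplicating every district scales $S^A$ and $n$ by the same factor) and, by H4, preserves the value of $\mE$; the two replicated data sets now share the common district count $n_1n_2$. The efficiency principle applies and gives $\mE(\mD_1)<\mE(\mD_2)$. Hence for \emph{all} data sets sharing a fixed value of $\vm$, the quantity $\mE$ is a strictly increasing function of $\sm$, with no restriction on the number of districts.

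Second, I fix a rational $a$ and look at the realizable outcomes on the line $\vm=a$. These occur only at rational values of $\sm$, and at each such $b$ the attained values of $\mE$ have infimum $\mE_{\text{min}}(a,b)$ and supremum $\mE_{\text{max}}(a,b)$. The globalized monotonicity says exactly that $\mE_{\text{max}}(a,b)\le\mE_{\text{min}}(a,b')$ whenever $b<b'$, so $b\mapsto\mE_{\text{min}}(a,b)$ and $b\mapsto\mE_{\text{max}}(a,b)$ are non-decreasing and the intervals $[\mE_{\text{min}}(a,b),\mE_{\text{max}}(a,b)]$ have pairwise-disjoint interiors inside the bounded range $[-1,1]$. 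A bounded monotone envelope has at most countably many jumps, and away from the jumps the interval collapses to a single point; thus on each vertical line the two graphs differ only over a countable set of $\sm$-values, so the vertical cross-section of the region between the graphs has one-dimensional measure zero. Fubini's theorem then yields that the whole region has zero volume.

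The step that needs genuine care — and the main obstacle — is reconciling this clean picture with the definition of $\mE_{\text{min}}$ and $\mE_{\text{max}}$ at irrational points, which is a \emph{two-dimensional} $\liminf$/$\limsup$ over nearby rational outcomes. Because the efficiency principle constrains only comparisons at a \emph{fixed} $\vm$, it gives no control as the approximating points drift in the $\vm$-direction, so the ordered-interval structure does not transfer to irrational points automatically. I expect to handle this by comparing the two-dimensional limits against the one-sided limits of the monotone envelopes $b\mapsto\mE_{\text{min}}(a,b)$ and $b\mapsto\mE_{\text{max}}(a,b)$ along the single line $\vm=a$, showing that the limit definition cannot inflate the cross-sectional support beyond the countable jump set of those envelopes. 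Pinning down that last bound is the delicate point on which the rigor of the whole argument rests.
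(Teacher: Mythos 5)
Your rational-line analysis is exactly the engine of the paper's own (very terse) proof sketch: the paper argues by contraposition --- positive volume would put a small cube between the two graphs, and any two fibers of that cube over points $(a,b)$ and $(a,b')$ with $b<b'$ would produce, after the same H4 replication trick you use, a pair $\mD_1,\mD_2$ violating the efficiency principle --- whereas you run the same comparison forward, turn it into the stacking inequality $\mE_{\text{max}}(a,b)\le\mE_{\text{min}}(a,b')$, and convert that to zero volume by counting jumps of a bounded monotone function and applying Fubini. On the rational skeleton your version is actually the more careful of the two (a positive-measure region need not contain a cube, so the paper's first sentence needs the same density/Fubini patch you supply).

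However, the difficulty you flag in your last paragraph is a genuine gap, and your proposed repair does not close it. For irrational $a$ the line $\vm=a$ contains no realizable outcomes at all, so there are no ``monotone envelopes $b\mapsto\mE_{\text{min}}(a,b)$ along the single line $\vm=a$'' to compare against: on that line the envelopes are \emph{defined} by the two-dimensional $\liminf$/$\limsup$, which is precisely the object you were trying to control. Since the irrational lines carry full measure in $\mP$, your Fubini step as written only integrates over a null set of slices. Worse, the axioms alone cannot rescue the statement under the literal two-dimensional reading of the limits: a measurement of the form $\mE=\frac 12\sm+h(\vm)$, with $h$ odd, bounded by $\frac 1{10}$, and chosen so that its values on the rationals are dense in $\left[-\frac 1{10},\frac 1{10}\right]$ near every point, satisfies H1--H4 and the efficiency principle, yet its upper and lower envelopes differ by $\frac 15$ over every irrational vertical line, giving positive volume. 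So the proposition holds only if the extension to irrational points is interpreted more restrictively (e.g.\ as limits taken within families of fixed $\vm$, or by discarding the irrational set altogether, which is evidently the paper's intent); your instinct that this is ``the delicate point on which the rigor of the whole argument rests'' is correct, and the paper's sketch silently skips it too.
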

\begin{proof}[Proof sketch] If the region has non-zero volume, then it contains a cube.  From this, it is straightforward to find a pair $\mD_1,\mD_2$ which (after applying H4 so they have the same number of districts) contradict the efficiency principle.
\end{proof}

\begin{example} Let $\mE$ be the symmetry measurement discussed in Section 4, namely twice the height of the simulated seats-votes curve above $(0,0)$ in the $\vm\sm$-plane.  It is straightforward to show that:
$$\mE_{\text{min}}(\vm,\sm)= \begin{cases} 0 &\text{ if $\vm>0$} \\
                                           2\cdot\sm &\text{ if $\vm<0$}\end{cases}
,\quad \mE_{\text{max}}(\vm,\sm)=\begin{cases} 2\cdot\sm &\text{ if $\vm>0$} \\
                                           1 &\text{ if $\vm<0$}\end{cases},$$
because monotonicity is the only general restriction on a seats-votes curve.  Proposition~\ref{P:axium} implies that the measurement fails the efficiency principle.  McGhee exhibited explicit examples of this failure in~\cite{McGhee1} and~\cite{McGhee2}. This measurement does satisfies the much weaker hypothesis that $\mE_{\text{min}}$ and $\mE_{\text{max}}$ individually are monotonically nondecreasing in $\sm$ (as do all measurements considered in this paper).  It would be reasonable to add this unassuming hypothesis to Definition~\ref{D:gdm}
\end{example}

McGhee's principle essentially requires a measurement to be a function of $(\vm,\sm)\in\mP$, except possibly on a set of measure zero.  We consider this very restrictive.  Although it is interesting to ask about the best function on $\mP$, it is also necessary to build a framework that includes more general measurements.  We consider it a benefit (not a drawback) that $\REG_2$ depends on $C$ in addition to $(\vm,\sm)$.  Another important measurement that fails the principle is $\mE=C^B-C^A$ (unweighted differential competitiveness) which could be used in tandem with $\REG_2$ as independent supporting evidence.

\bibliographystyle{amsplain}

\begin{thebibliography}{9}
\bibitem{Duchin} M. Bernstein, M. Duchin \emph{A formula goes to court: partisan gerrymandering and the efficienty gap} 2017 preprint
\bibitem{JC} J. Chen, \emph{The impact of political geography on Wisconsin redistricting: an analysis of Wisconsin's Act 43 Assebly Districting plan}, Election Law Journal, Vol. 16 (2017), No. 4.
\bibitem{Cover} B.P. Cover, \emph{Quantifying partisan gerrymandering: an evaluation of the efficiency gap proposal}, 70 Stan. L. Rev. (forthcoming 2018).
\bibitem{DUKE} G. Herschlag, R. Ravier, J. Mattingly, \emph{Evaluating partisan gerrymandering in Wisconsin}, preprint, September 2017,  arXiv:1709.01596.
\bibitem{GK} A. Gelman, G. King, \emph{A unified method of evaluating electoral systems and redistricing plans}, American Journal of Political Science, Vol. 38 (1994), pp. 514-554.
\bibitem{GrK} B. Grofman, G. King, \emph{The future of partisan symmetry as a judicial test for partisan gerrymandering after LULAC v. Perry}, Election Law Journal, Vol. 6 (2007), No. 1.
\bibitem{McGhee1} E. McGhee, \emph{Measuring partisan bias in single-member district electoral systems}, Legislative Studies Quarterly, Vol. 29 (2014), No. 1.
\bibitem{McGhee2} E. McGhee, \emph{Measuring efficiency in redistricting}, preprint written May 1, 2017.
\bibitem{NagleP} J. Nagle, \emph{Measures of partisan bias for legislating fair elections}, Election Law Journal: Rules, Politics, and Policy, Vol. 14 (2015), No. 4, pp. 346-360.
\bibitem{Nagle}J. Nagle, \emph{How competitive should a fair single member districing plan be?}, Election Law Journal, Vol. 16 (2017), No. 1.
\bibitem{SM} N. Stephanopoulos, E. McGhee, \emph{Partisan gerrymandering and the efficiency gap}, The Unniversity of Chicago Law Review (2015), 831-900.
\bibitem{SM2} N. Stephanopoulos, E. McGhee, \emph{The measurement of a metric: the debate over quantifying patrisan gerrymandering}, preprint 2018.
\bibitem{Tufte} E. Tufte, \emph{The relationship between seats and votes in two-arty systems}, The American Political Science Review (1973), VOl. 67, No. 2, pp. 540-554.
\bibitem{Veomett} E. Veomett, \emph{The efficiency gap, voter turnout, and the efficiency principle}, preprint 2018.
\end{thebibliography}

\end{document}